\renewcommand{\section}{\@startsection
  {section}%
  {1}%
  {0mm}%
  {-1\baselineskip}%
  {0.5\baselineskip}%
  {\normalfont\large\bfseries}%
}
\renewcommand{\subsection}{\@startsection
  {subsection}%
  {2}%
  {0mm}%
  {-1\baselineskip}%
  {0.5\baselineskip}%
  {\normalfont\large\itshape}%
}
\renewcommand{\subsubsection}{\@startsection
  {subsubsection}%
  {3}%
  {0mm}%
  {-1\baselineskip}%
  {0.5\baselineskip}%
  {\normalfont\itshape}%
}
\newsavebox{\tempbox}
\renewcommand{\@makecaption}[2]{
  \vspace{10pt}
  \sbox{\tempbox}{\textbf{#1.} #2}
  \ifthenelse{\lengthtest{\wd\tempbox > \linewidth}}{
    \textbf{#1.} #2\par
  }{
    \begin{center}
      \textbf{#1.} #2
    \end{center}
  }
}
\numberwithin{equation}{section}
\numberwithin{figure}{section}
\newtheoremstyle{mythm}
  {}
  {}
  {\itshape}
  {}
  {\bfseries}
  {.}
  {.5em}
  {\thmname{#1}~\thmnumber{#2}\ifthenelse{\equal{\thmnote{#3}}{}}{}{~(\thmnote{#3})}}
\newtheoremstyle{mydefn}
  {}
  {}
  {\upshape}
  {}
  {\bfseries}
  {.}
  {.5em}
  {\thmname{#1}~\thmnumber{#2}\ifthenelse{\equal{\thmnote{#3}}{}}{}{~(\thmnote{#3})}}
\newtheoremstyle{myremark}
  {}
  {}
  {\upshape}
  {}
  {\itshape}
  {.}
  {.5em}
  {\thmname{#1}~\thmnumber{#2}\ifthenelse{\equal{\thmnote{#3}}{}}{}{~(\thmnote{#3})}}
\theoremstyle{mythm}
\newtheorem{theorem}{Theorem}[section]
\newtheorem{lemma}[theorem]{Lemma}
\newtheorem{proposition}[theorem]{Proposition}
\newtheorem{corollary}[theorem]{Corollary}
\theoremstyle{mydefn}
\newtheorem{example}[theorem]{Example}
\theoremstyle{myremark}
\theoremstyle{mythm}
\newcommand{\uend}{\hfill$\lrcorner$}
\newcounter{claimcounter}
\newenvironment{claim}[1][]{
  \renewcommand{\proof}{\smallskip\par\noindent\textit{Proof. }}
  \medskip\par\noindent%
  \ifthenelse{\equal{#1}{}}{%
    \setcounter{claimcounter}{0}\refstepcounter{claimcounter}\textit{Claim~\arabic{claimcounter}.}
  }{%
    \ifthenelse{\equal{#1}{resume}}{%
      \refstepcounter{claimcounter}\textit{Claim~\arabic{claimcounter}.}
    }{%
      \textit{Claim~#1.}
    }
  }
}{
  \par\medskip
}
\newlist{caselist}{description}{10}
\setlist[caselist]{font=\itshape\mdseries}
\newlist{eroman}{enumerate}{2}
\setlist[eroman,1]{label=(\roman*)}
\setlist[eroman,2]{label=(\alph*)}
\newlist{ealph}{enumerate}{1}
\setlist[ealph]{label=(\Alph*)}
\newcounter{nlistcounter}
\newenvironment{nlist}[1]{
  \renewcommand{\thenlistcounter}{\upshape(#1.\arabic{nlistcounter})}
  \begin{list}{\bfseries\thenlistcounter}{%
      \usecounter{nlistcounter}
      \setlength{\labelwidth}{1.5em}%
      \setlength{\leftmargin}{\labelwidth}%
      \addtolength{\leftmargin}{\labelsep}%
      \setlength{\listparindent}{0em}%
      \setlength{\topsep}{5pt}%
      \setlength{\itemsep}{5pt}%
      \setlength{\parsep}{0pt}%
    }
  }{
  \end{list}
}
\newcommand{\bigmid}{\;\big|\;}
\renewcommand{\mathbf}[1]{\textit{\bfseries #1}}
\renewcommand{\tilde}{\widetilde}
\renewcommand{\bar}{\overline}
\renewcommand{\vec}{\overrightarrow}
\renewcommand{\phi}{\varphi}
\renewcommand{\epsilon}{\varepsilon}
\newcommand{\NN}{{\mathbb N}}
\newcommand{\CC}{{\mathcal C}}
\newcommand{\CS}{{\mathcal S}}
\newcommand{\CT}{{\mathcal T}}
\newcounter{rbcounter}
\newcommand{\tw}{\operatorname{tw}}
\newcommand{\bw}{\operatorname{bw}}
\newcommand{\ord}{\operatorname{ord}}
\newcommand{\torso}[2]{#1\llbracket#2\rrbracket}
\begin{document}
\title{Tangles and Connectivity in Graphs}
\author{Martin Grohe\\\normalsize RWTH Aachen University\\\normalsize\texttt{grohe@informatik.rwth-aachen.de}}
\date{}

\maketitle

\begin{abstract}
  This paper is a short introduction to the theory of tangles, both in
  graphs and general connectivity systems. An emphasis is
  put on the correspondence between tangles of order $k$ and
  $k$-connected components. In particular, we prove that there is a
  one-to-one correspondence between the triconnected components of a
  graph and its tangles of order $3$.
 \end{abstract}

\section{Introduction}
Tangles, introduced by Robertson and Seymour in the tenth paper
\cite{gm10} of their graph minors series \cite{gmseries}, have come
to play an important part in structural graph theory. For example,
Robertson and Seymour's structure theorem for graphs with excluded
minors is phrased in terms of tangles in its general
form~\cite{gm16}. Tangles have
also played a role in algorithmic structural graph theory (for
example in
\cite{demhajkaw05,grokawree13,gromar12,groschwe15b,kawwol11}).

Tangles describe highly connected regions in a
graph. In a precise mathematical sense, they are ``dual'' to
decompositions (see Theorem~\ref{theo:duality}). Intuitively, a graph has a highly
connected region described by a tangle if and only if it does not
admit a decomposition along separators of low order. By
decomposition I always mean a decomposition in a treelike fashion;
formally, this is captured by the notions of tree decomposition or branch decomposition.

However, tangles describe regions of a graph in an indirect and
elusive way. This is why we use the unusual term ``region'' instead of
``subgraph'' or ``component''. The idea is that a tangle describes a region by
pointing to it. A bit more formally, a \emph{tangle of order $k$}
assigns a ``big side'' to every separation of order less than
$k$. The big side is where the  (imaginary) region described by the
tangle is supposed to be. Of course this assignment of ``big sides''
to the separations is subject to certain consistency and nontriviality
conditions, the ``tangle axioms''.

To understand why this way of describing a ``region'' is a good idea, let us review decompositions of
graphs into their $k$-connected components. It is well known that
every graph can be decomposed into its connected components and into
its biconnected components. The former are the (inclusionwise) maximal connected
subgraphs, and the latter the maximal 2-connected subgraphs. It is
also well-known that a graph can be decomposed into its triconnected
components, but the situation is more complicated here. Different from
what one might guess, the triconnected components are not maximal
3-connected subgraphs; in fact they are not even subgraphs, but just
topological subgraphs (see Section~\ref{sec:prel} for a definition of
topological subgraphs). Then what about 4-connected components?

 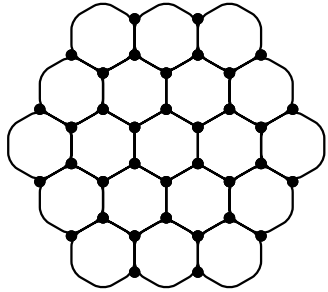
\begin{figure}
  \centering
\begin{tikzpicture}
  [
  current point is local=true,
  scale=0.8,
  line width=0.3mm,
  every node/.style={draw,circle,fill=black,inner sep=0mm,minimum
     size=1.2mm},
  every edge/.style={draw}
  ]

  \newcommand{\grd}{10.39230mm}
  \newcommand{\hex}{{
    +(90:6mm) node {} -- 
    +(150:6mm) node {} -- 
    +(210:6mm) node {} -- 
    +(270:6mm) node {} -- 
    +(330:6mm) node {} -- 
    +(30:6mm) node {} -- 
    cycle
  }}
  \newcommand{\hexref}{{
      [rounded corners]
    +(90:6mm) node {} --     
    +(30:6mm) node {} --     
    +(330:6mm) node {} --   
    +(270:6mm) node {} --   
    +(210:6mm) --   
    +(150:6mm) --   
    cycle
  }}
  \newcommand{\hexrde}{{
      [rounded corners]
    +(90:6mm) node {} --     
    +(30:6mm) node {} --     
    +(330:6mm) node {} --   
    +(270:6mm) --   
    +(210:6mm) --   
    +(150:6mm) node {} --   
    cycle
  }}
  \newcommand{\hexrcd}{{
      [rounded corners]
    +(90:6mm) node {} --     
    +(30:6mm) node {} --     
    +(330:6mm) --   
    +(270:6mm) --   
    +(210:6mm) node {} --   
    +(150:6mm) node {} --   
    cycle
  }}
  \newcommand{\hexrbc}{{
      [rounded corners]
    +(90:6mm) node {} --     
    +(30:6mm) --     
    +(330:6mm)--   
    +(270:6mm) node {} --   
    +(210:6mm) node {} --   
    +(150:6mm) node {} --   
    cycle
  }}
  \newcommand{\hexrab}{{
      [rounded corners]
    +(90:6mm) --     
    +(30:6mm) --     
    +(330:6mm) node {} --   
    +(270:6mm) node {} --   
    +(210:6mm) node {} --   
    +(150:6mm) node {} --   
    cycle
  }}
  \newcommand{\hexraf}{{
      [rounded corners]
    +(90:6mm) --     
    +(30:6mm) node {} --     
    +(330:6mm) node {} --   
    +(270:6mm) node {} --   
    +(210:6mm) node {} --   
    +(150:6mm) --   
    cycle
  }}
  \newcommand{\hexra}{{
      [rounded corners]
    +(90:6mm) --     
    +(30:6mm) node {} --     
    +(330:6mm) node {} --   
    +(270:6mm) node {} --   
    +(210:6mm)  node {} --   
    +(150:6mm)  node {} --   
    cycle
  }}
  \newcommand{\hexrb}{{
      [rounded corners]
    +(90:6mm) node {} --     
    +(30:6mm) --     
    +(330:6mm) node {} --   
    +(270:6mm) node {} --   
    +(210:6mm)  node {} --   
    +(150:6mm)  node {} --   
    cycle
  }}
  \newcommand{\hexrc}{{
      [rounded corners]
    +(90:6mm) node {} --     
    +(30:6mm) node {} --     
    +(330:6mm) --   
    +(270:6mm) node {} --   
    +(210:6mm)  node {} --   
    +(150:6mm)  node {} --   
    cycle
  }}
  \newcommand{\hexrd}{{
      [rounded corners]
    +(90:6mm) node {} --     
    +(30:6mm) node {} --     
    +(330:6mm) node {} --   
    +(270:6mm) --   
    +(210:6mm)  node {} --   
    +(150:6mm)  node {} --   
    cycle
  }}
  \newcommand{\hexre}{{
      [rounded corners]
    +(90:6mm)  node {} --     
    +(30:6mm) node {} --     
    +(330:6mm) node {} --   
    +(270:6mm) node {} --   
    +(210:6mm) --   
    +(150:6mm)  node {} --   
    cycle
  }}
  \newcommand{\hexrf}{{
      [rounded corners]
    +(90:6mm) node {} --     
    +(30:6mm) node {} --     
    +(330:6mm) node {} --   
    +(270:6mm) node {} --   
    +(210:6mm)  node {} --   
    +(150:6mm)  --   
    cycle
  }}

  \draw \hex 
  ++(240:\grd) \hex  
  ++(0:\grd) \hex
  ++(60:\grd) \hex
  ++(120:\grd) \hex
  ++(180:\grd) \hex
  ++(240:\grd) \hex  
  ++(240:\grd) \hexre 
  ++(300:\grd) \hexrde
  ++(0:\grd) \hexrd
  ++(0:\grd) \hexrcd
  ++(60:\grd) \hexrc
  ++(60:\grd) \hexrbc
  ++(120:\grd) \hexrb
  ++(120:\grd) \hexrab
  ++(180:\grd) \hexra
  ++(180:\grd) \hexraf
  ++(240:\grd) \hexrf  
  ++(240:\grd) \hexref  
  ;
 
\end{tikzpicture}

 \caption{A hexagonal grid}
  \label{fig:hex}
\end{figure}

It turns out that in general a graph does not have a reasonable decomposition
into 4-connected components (neither into $k$-connected components for
any $k\ge 5$), at least if these components are supposed to be
$4$-connected and some kind of subgraph. To understand the difficulty,
consider the hexagonal grid in Figure~\ref{fig:hex}. It is
3-connected, but not 4-connected. In fact, for any two nonadjacent
vertices there is a separator of order $3$ separating these two
vertices. Thus it is not clear what the 4-connected components of a
grid could possibly be (except, of course, just the single vertices,
but this would not lead to a meaningful decomposition). But maybe we
need to adjust our view on connectivity: a
hexagonal grid is fairly highly connected in a ``global sense''. All
its low-order separations are very unbalanced. In particular, all
separations of order $3$ have just a single vertex on one side and all
other vertices on the other side. This type of global connectivity is
what tangles are related to. For example, there is a unique tangle of
order $4$ in the hexagonal grid: the big side of a separation of order
$3$ is obviously the side that contains all but one vertex. The
``region'' this tangle describes is just the grid itself. This does not
sound particularly interesting, but the grid could be a subgraph of a
larger graph, and then the tangle would identify it as a highly
connected region within that graph. A key theorem about tangles is
that every graph admits a canonical tree decomposition into its
tangles of order $k$ \cite{cardiehar+13a,gm10}. This can be seen as a
generalisation of the decomposition of a graph into its 3-connected
components. A different, but related generalisation has been given in
\cite{cardiehun+14}.

The theory of tangles and decompositions generalises from graphs to an
abstract setting of \emph{connectivity systems}. This includes nonstandard
notions of connectivity on graphs, such as the 
``cut-rank'' function, which leads to the notion of ``rank
width'' \cite{oum05,oumsey06}, and connectivity functions on other
structures, for example matroids. Tangles give us an abstract notion of
``k-connected components'' for these connectivity
systems. The canonical decomposition theorem can be generalised from
graphs to this abstract setting~\cite{geegerwhi09,hun11}.

This paper is a short introduction to the basic theory of tangles,
both for graphs and for general connectivity systems. We put a
particular emphasis on the correspondence between tangles of order $k$ and
$k$-connected components of a graph for $k\le 3$, which gives some
evidence to the claim that for all $k$, tangles of order $k$ may be viewed as a
formalisation of the intuitive notion of ``$k$-connected
component''. 

The paper provides background material for my talk at LATA. The talk
itself will be concerned with more recent results \cite{gro16+} and, in particular,
computational aspects and applications of tangles~\cite{gromar15,groschwe15a,groschwe15b}.

\section{Preliminaries}
\label{sec:prel}

We use a standard terminology and notation (see \cite{die10} for background); let me just review a few
important notions. All graphs
considered in this paper are finite and simple. The vertex set
and edge set of a graph $G$ are denoted by $V(G)$ and $E(G)$,
respectively. The \emph{order} of $G$ is $|G|:=|V(G)|$. For a set
$W\subseteq V(G)$, we denote the \emph{induced subgraph} of $G$ with vertex
set $W$ by $G[W]$ and the induced subgraph with vertex set
$V(G)\setminus W$ by $G\setminus W$. 
The \emph{(open) neighbourhood} of a vertex $v$ in $G$ is denoted by
$N^G(v)$, or just $N(v)$ if $G$ is
clear from the context. For a set $W\subseteq V(G)$ we let
$
N(W):=\Big(\bigcup_{v\in W}N(v)\Big)\setminus W,
$
and for a subgraph $H\subseteq G$ we let $N(H):=N(V(H))$. 
The \emph{union} of two graphs $A,B$ is the graph $A\cup B$ with vertex set
$V(A)\cup V(B)$ and edge set $E(A)\cup E(B)$, and the \emph{intersection}
$A\cap B$ is defined similarly.

A \emph{separation} of $G$ is a pair $(A,B)$ of subgraphs of $G$ such
that $A\cup B=G$ and $E(A)\cap E(B)=\emptyset$. The \emph{order} of
the separation $(A,B)$ is $\ord(A,B):=|V(A)\cap V(B)|$. 
A separation $(A,B)$ is \emph{proper} if
$V(A)\setminus V(B)$ and $V(B)\setminus V(A)$ are both nonempty. 
A graph $G$ is \emph{$k$-connected} if $|G|>k$ and $G$ has no proper
$(k-1)$-separation. 

A \emph{subdivision} of $G$ is a graph obtained from $G$ by
subdividing some (or all) of the edges, that is, replacing them by
paths of length at least $2$. 
A graph $H$ is a \emph{topological subgraph} of $G$ if a
subdivision of $H$ is a subgraph of $G$. 

\section{Tangles in a Graph}
In this section we introduce tangles of graphs, give a few examples, and review
a few basic facts about tangles, all well-known and at least
implicitly from Robertson and Seymour's fundamental paper on tangles
\cite{gm10} (except Theorem~\ref{theo:reed}, which is due to Reed~\cite{ree97}).

Let $G$ be a graph.
A \emph{$G$-tangle} of order $k$ is a family $\CT$ of separations of
$G$ satisfying the following conditions.
\begin{nlist}{GT}
  \setcounter{nlistcounter}{-1}
\item\label{li:gt0}
  The order of all separations $(A,B)\in\CT$ is less than $k$.
\item\label{li:gt1}
  For all separations $(A,B)$ of $G$ of order less than $k$, either
  $(A,B)\in\CT$ or $(B,A)\in\CT$.
\item\label{li:gt2}
  If $(A_1,B_1), (A_2,B_2),(A_3,B_3)\in\CT$ then $A_1\cup A_2\cup
  A_3\neq G$.
\item\label{li:gt3}
  $V(A)\neq V(G)$ for all $(A,B)\in\CT$.
\end{nlist}
Observe that
\ref{li:gt1} and \ref{li:gt2} imply that for all separations $(A,B)$
of $G$ of order less than $k$, exactly one of the separations
$(A,B),(B,A)$ is in $\CT$.

We denote the order of a tangle $\CT$ by $\ord(\CT)$. 

\begin{example}\label{exa:tangle1}
  Let $G$ be a graph and $C\subseteq G$ a cycle. Let $\CT$ be the set
  of all separations $(A,B)$ of $G$ of order $1$ such that
  $C\subseteq B$. Then $\CT$ is a $G$-tangle of order $2$.
  
  To see this, note that $\CT$ trivially satisfies \ref{li:gt0}. It
  satisfies \ref{li:gt1}, because for every separation $(A,B)$ of $G$
  of order $1$, either $C\subseteq A$ or $C\subseteq B$. To see that
  $\CT$ satisfies \ref{li:gt3}, let $(A_i,B_i)\in\CT$ for
  $i=1,2,3$. Note that it may happen that $V(A_1)\cup V(A_2)\cup
  V(A_3)=V(G)$ (if $|C|=3$). However, no edge of $C$ can be in
  $E(A_i)$ for any $i$, because $C\subseteq B_i$ and $|A_i\cap B_i|\le
  1$. Hence $E(A_1)\cup A(A_2)\cup E(A_3)\neq E(G)$, which implies
  \ref{li:gt2}.
  Finally, $\CT$ satisfies \ref{li:gt3}, because $V(C)\setminus
  V(A)\neq\emptyset$ for all $(A,B)\in\CT$.
  \uend
\end{example}

\begin{example}
  Let $G$ be a graph and $X\subseteq V(G)$ a clique in $G$. Note that
  for all separations $(A,B)$ of $G$, either $X\subseteq V(A)$ or
  $X\subseteq V(B)$. For every $k\ge 1$, let $\CT_k$ be the
  set of all separations $(A,B)$ of $G$ of order less than $k$ such that
  $X\subseteq V(B)$.

  Then if $k<\frac{2}{3}|X|+1$, the set $\CT_k$ is a $G$-tangle of
  order $k$ . We omit the proof, which is similar to the proof in the
  previous example.

  Instead, we prove that $\CT_k$ is not necessarily a $G$-tangle if
  $k=\frac{2}{3}|X|+1$. To see this, let $G$ be a complete graph of
  order $3n$, $k:=2n+1$, and $X:=V(G)$. Suppose for contradiction that
  $\CT_k$ is a $G$-tangle of order $k$. Partition $X$ into three sets
  $X_1,X_2,X_3$ of size $n$. For $i\ne j$, let $A_{ij}:=G[X_i\cup X_j]$ and
  $B_{ij}:=G$. Then $(A_{ij},B_{ij})$ is a separation of $G$ of order
  $2n<k$. By \ref{li:gt1} and \ref{li:gt3}, we have
  $(A_{ij},B_{ij})\in\CT_k$. However, $A_{12}\cup A_{13}\cup
  A_{23}=G$, and this contradicts \ref{li:gt2}.
  \uend
\end{example}

\begin{figure}[t]
  \centering
  \begin{tikzpicture}
      [
      every node/.style={draw,circle,fill=black,inner sep=0mm,minimum
        size=2mm},
      every edge/.style={draw,thick},
      scale=0.7
      ]
      \foreach \x in {0,...,4}
         \foreach \y in {0,...,4}
            \fill (\x,\y) circle (3pt);
      \foreach \x in {0,...,4}
      {
         \draw[thick] (\x,0)--(\x,4);
         \draw[thick] (0,\x)--(4,\x);
       }
          
  \end{tikzpicture}
  \caption{A $(5\times 5)$-grid}\label{fig:grid}
\end{figure}
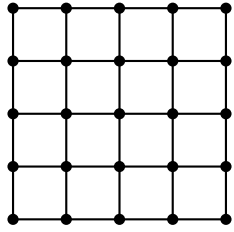

\begin{example}\label{exa:grid}
  Let $G$ be a graph and $H\subseteq G$ a $(k\times k)$-grid (see Figure~\ref{fig:grid}). Let
  $\CT$ be the set of all separations $(A,B)$ of $G$ of order at most
  $k-1$ such that $B$ contains some row of
  the grid. Then $\CT$ is a $G$-tangle of order $k$. (See \cite{gm10}
  for a proof.)
  \uend
\end{example}

The reader may wonder why in \ref{li:gt2} we take three separations, instead of two or four or seventeen. The following lemma
gives (some kind of) an explanation: we want our tangles to be closed
under intersection, in the weak form stated as assertion (3) of the
lemma; this is why taking just two separations in \ref{li:gt2} would
not be good enough. Three is just enough, and as we do not want to
be unnecessarily restrictive, we do not take more than three separations.

\begin{lemma}\label{lem:tangle-closure}
  Let $\CT$ be a $G$-tangle of order $k$.
  \begin{enumerate}
  \item If $(A,B)$ is a separation of $G$ with $|V(A)|<k$ then $(A,B)\in\CT$. 
  \item  If $(A,B)\in\CT$ and $(A',B')$ is a separation of $G$ of
    order $<k$ such that $B'\supseteq B$, then $(A',B')\in\CT$.
  \item  If $(A,B), (A',B')\in\CT$ and $\ord(A\cup A',B\cap B')<k$
    then $(A\cup A',B\cap B')\in\CT$.
  \end{enumerate}
\end{lemma}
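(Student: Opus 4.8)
The plan is to prove the three assertions in order, using each to help prove the next, and relying throughout on the observation (following \ref{li:gt1} and \ref{li:gt2}) that for a separation of order less than $k$ exactly one orientation lies in $\CT$. So in practice each part reduces to ruling out the \emph{wrong} orientation via one of the tangle axioms.

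For assertion (1): suppose $(A,B)$ is a separation of $G$ with $|V(A)|<k$. Then $\ord(A,B)=|V(A)\cap V(B)|\le|V(A)|<k$, so by \ref{li:gt1} either $(A,B)\in\CT$ or $(B,A)\in\CT$. If $(B,A)\in\CT$, then taking $(A_1,B_1)=(A_2,B_2)=(A_3,B_3)=(B,A)$ in \ref{li:gt2} gives $B\cup B\cup B=B\ne G$; but $A\cup B=G$ forces $V(B)=V(G)$ only if $V(A)\subseteq V(B)$, which need not hold — so instead I use \ref{li:gt3} directly: $(B,A)\in\CT$ would require $V(B)\ne V(G)$, and I argue $V(B)=V(G)$ because every vertex of $A$ outside $B$ would have to be covered, wait — cleanest is: from $(B,A)\in\CT$ and \ref{li:gt2} applied with all three equal, $B\ne G$ as a graph; but also $A\cup B=G$ and $|V(A)|<k$ does not immediately give $B=G$. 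The right argument: if $(B,A)\in\CT$ then by \ref{li:gt3}, $V(B)\ne V(G)$, so some vertex $v\in V(A)\setminus V(B)$; fine, that is consistent. So I instead use \ref{li:gt2} with the three copies of $(B,A)$: we need $B\cup B\cup B\ne G$, i.e.\ $B\ne G$; this holds iff $A\not\subseteq B$, i.e.\ $V(A)\setminus V(B)\ne\emptyset$ or $E(A)\setminus E(B)\ne\emptyset$. If $A\subseteq B$ then $G=A\cup B=B$ so $V(G)=V(B)$, contradicting \ref{li:gt3}. Hence $B\ne G$, and \ref{li:gt2} is satisfied, giving no contradiction — so this approach shows $(B,A)$ \emph{is} permissible, which is wrong. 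The actual correct move: apply \ref{li:gt3} to $(B,A)$, which says $V(B)\ne V(G)$; combined with $|V(A)|<k\le|G|$ there is a vertex of $B$ outside $A$; still consistent. I will therefore argue as follows instead: take $(A,B)$ with $|V(A)|<k$; then $G\setminus V(A)$ is nonempty (as $|G|\ge k$... actually $|G|>k$ need not hold). Let me just commit to the standard argument: $(B,A)\in\CT$ together with axiom \ref{li:gt2} using three copies yields $B=B\cup B\cup B\ne G$, hence $V(A)\not\subseteq V(B)$ or $E(A)\not\subseteq E(B)$; but since $E(A)\cap E(B)=\emptyset$ and $A\cup B=G$, we get that all edges of $A$ are in $G$ but we cannot have $E(A)\subseteq E(B)$ unless $E(A)=\emptyset$. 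I will sort out the exact clean contradiction in the writeup; the essential content is that $|V(A)|<k$ makes $A$ the ``small side,'' and any of \ref{li:gt2}, \ref{li:gt3} forbids $G$ (or $V(G)$) from being the union of small sides.

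For assertion (2): given $(A,B)\in\CT$ and a separation $(A',B')$ of order $<k$ with $B\subseteq B'$, apply \ref{li:gt1} to get $(A',B')\in\CT$ or $(B',A')\in\CT$. If $(B',A')\in\CT$, then $(A,B),(B',A'),(B',A')\in\CT$, and \ref{li:gt2} demands $A\cup B'\cup B'=A\cup B'\ne G$. But $B\subseteq B'$ gives $A\cup B'\supseteq A\cup B=G$, so $A\cup B'=G$, a contradiction. Hence $(A',B')\in\CT$.

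For assertion (3): given $(A,B),(A',B')\in\CT$ with $\ord(A\cup A',B\cap B')<k$, first check $(A\cup A',B\cap B')$ really is a separation — $(A\cup A')\cup(B\cap B') = (A\cup A'\cup B)\cap(A\cup A'\cup B')=G\cap G=G$, and $E(A\cup A')\cap E(B\cap B')\subseteq (E(A)\cup E(A'))\cap E(B)\cap E(B')=\emptyset$ since $E(A)\cap E(B)=E(A')\cap E(B')=\emptyset$. Then by \ref{li:gt1} either $(A\cup A',B\cap B')\in\CT$ or $(B\cap B',A\cup A')\in\CT$. In the latter case, $(A,B),(A',B'),(B\cap B',A\cup A')$ are all in $\CT$, and \ref{li:gt2} requires $A\cup A'\cup(A\cup A')=A\cup A'\ne G$; but $(A\cup A')\cup(B\cap B')=G$ as just shown, so we would need $B\cap B'\not\subseteq A\cup A'$, which is fine in general — so instead I note the three sets $A, A', A\cup A'$ have union $A\cup A'$, and $A\cup A'\ne G$ would be needed, yet we have not shown $A\cup A'=G$. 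The clean contradiction: the third separation's ``$A$-side'' is $B\cap B'\subseteq B$, so take $(A,B),(A',B'),(B\cap B',A\cup A')\in\CT$ and observe $A\cup A'\cup(B\cap B') = G$ by the union computation above; this is exactly a violation of \ref{li:gt2} with the first parts being $A$, $A'$, $B\cap B'$. Hence $(B\cap B',A\cup A')\notin\CT$, so $(A\cup A',B\cap B')\in\CT$, which is (3).

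\textbf{Main obstacle.} The only real subtlety is bookkeeping with graphs (rather than vertex sets) as the two sides of a separation: the axioms \ref{li:gt2} and \ref{li:gt3} speak about $A_1\cup A_2\cup A_3\ne G$ and $V(A)\ne V(G)$ respectively, and I must be careful which one to invoke — $G$ as a graph versus $V(G)$ as a set — since a union of the sides can cover all vertices without covering all edges (as Example~\ref{exa:tangle1} already flags). Part (1) is where this matters most and is the step I expect to need the most care; parts (2) and (3) then follow by the short union computations sketched above.
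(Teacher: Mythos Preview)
Your arguments for (2) and (3) are correct and, for (3), essentially identical to the paper's (the paper proves only (3) and leaves (1) and (2) to the reader). Your verification that $(A\cup A',B\cap B')$ is indeed a separation is even a bit more careful than the paper's write-up.

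Part (1), however, has a genuine gap. You try several routes---three copies of $(B,A)$ in \ref{li:gt2}, or \ref{li:gt3} applied to $(B,A)$---and correctly observe that neither yields a contradiction on its own; you then defer the ``exact clean contradiction'' to the writeup. But none of your attempts can be completed as stated: from $(B,A)\in\CT$ alone, \ref{li:gt2} only gives $B\neq G$ and \ref{li:gt3} only gives $V(B)\neq V(G)$, and neither contradicts $|V(A)|<k$. The missing idea is to manufacture a second separation whose reverse is blocked by \ref{li:gt3}. Concretely: let $B^{*}:=(V(G),E(B))$, i.e.\ enlarge $B$ by throwing in all vertices of $G$ but no new edges. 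Then $(A,B^{*})$ is a separation (same edge partition, union still $G$) of order $|V(A)\cap V(B^{*})|=|V(A)|<k$. Since $V(B^{*})=V(G)$, \ref{li:gt3} forbids $(B^{*},A)\in\CT$, so by \ref{li:gt1} we get $(A,B^{*})\in\CT$. Now if $(B,A)\in\CT$, apply \ref{li:gt2} to $(B,A),(A,B^{*}),(A,B^{*})$: the union of first sides is $B\cup A=G$, a contradiction. Hence $(A,B)\in\CT$. This is exactly the ``graph vs.\ vertex-set'' subtlety you flagged in your obstacle paragraph: the fix is to pad $B$ to full vertex set so that \ref{li:gt3} bites.
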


\begin{proof}
  We leave the proofs of (1) and (2) to the reader. To prove (3), let
  $(A,B), (A',B')\in\CT$ and $\ord(A\cup A',B\cap B')<k$. By
  \ref{li:gt1}, either $(A\cup A',B\cap B')\in\CT$ or $(B\cup B',A\cap
  A')\in\CT$. As $A\cup A'\cup (B\cup B')=G$, by \ref{li:gt2} we cannot
  have $(B\cup B',A\cap
  A')\in\CT$.
  \qed
\end{proof}

\begin{corollary}\label{cor:tangle-closure}
  Let $\CT$ be a $G$-tangle of order $k$. Let
  $(A,B),(A',B')\in\CT$. Then $|B\cap B'|\ge k$.
\end{corollary}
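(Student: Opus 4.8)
The plan is to argue by contradiction. Suppose $|B\cap B'|<k$, i.e.\ $|V(B)\cap V(B')|<k$; I will produce three separations in $\CT$ whose first components cover $G$, contradicting \ref{li:gt2}.

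The first step is to complete $B\cap B'$ to a separation. Put $A'':=A\cup A'$ and $B'':=B\cap B'$ (the union, resp.\ intersection, of subgraphs of $G$), and check that $(B'',A'')$ is a separation of $G$. The covering condition $A''\cup B''=G$ follows from a short case analysis: a vertex of $G$ lies in $V(A)$ or in $V(B)$, and in $V(A')$ or in $V(B')$, so if it is missed by both $V(A)$ and $V(A')$ it lies in $V(B)\cap V(B')=V(B'')$; the identical analysis on edges (using $E(A)\cup E(B)=E(G)$ and $E(A')\cup E(B')=E(G)$) gives $E(A'')\cup E(B'')=E(G)$. Edge-disjointness holds because $E(A'')\cap E(B'')=\big(E(A)\cup E(A')\big)\cap\big(E(B)\cap E(B')\big)=\emptyset$, since $E(A)\cap E(B)=\emptyset$ and $E(A')\cap E(B')=\emptyset$.

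Next, since $|V(B'')|=|V(B)\cap V(B')|<k$, the order of $(B'',A'')$ is at most $|V(B'')|<k$, so Lemma~\ref{lem:tangle-closure}(1) yields $(B'',A'')\in\CT$. Now $(A,B)$, $(A',B')$ and $(B'',A'')=(B\cap B',A\cup A')$ are three separations in $\CT$, and the union of their first components is $A\cup A'\cup(B\cap B')$, which equals $G$ by the same covering computation as above. This contradicts \ref{li:gt2}, so $|B\cap B'|\ge k$.

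I do not expect a real obstacle here: the only thing to get right is the bookkeeping confirming that $(B\cap B',A\cup A')$ is a genuine separation of $G$, and that it is the ``$B\cap B'$-side'' (rather than its reverse) that is forced into $\CT$ — which is precisely what the hypothesis $|V(B)\cap V(B')|<k$ buys us through Lemma~\ref{lem:tangle-closure}(1). In effect the corollary is the contrapositive of Lemma~\ref{lem:tangle-closure}(3) read together with part~(1).
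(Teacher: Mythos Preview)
Your proof is correct. The paper states the corollary without an explicit proof, leaving it as an immediate consequence of Lemma~\ref{lem:tangle-closure}; your argument supplies exactly the intended derivation, invoking part~(1) of the lemma to place $(B\cap B',A\cup A')$ in $\CT$ and then hitting~\ref{li:gt2} directly with the triple $(A,B),(A',B'),(B\cap B',A\cup A')$. One could alternatively route through part~(3) of the lemma (which would give $(A\cup A',B\cap B')\in\CT$ as well, yielding both orientations in $\CT$), but your path is slightly more direct and amounts to the same thing.
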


The following lemma will allow us, among
other things, to give an alternative characterisation of tangles in
terms of so-called brambles.

\begin{lemma}
\label{lem:reed}
  Let $\CT$ be a $G$-tangle of order $k$. Then for every set
  $S\subseteq V(G)$ of cardinality $|S|<k$ there is a unique connected
  component $C(\CT,S)$ of $G\setminus S$ such that for all separations
  $(A,B)$ of $G$ with $V(A)\cap V(B)\subseteq S$ we have
  $(A,B)\in\CT\iff C(\CT,S)\subseteq B$.
\end{lemma}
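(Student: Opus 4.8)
The plan is to use the partition of $G\setminus S$ into its connected components to build a separation for each component, and then let the tangle choose which one is "big." Concretely, fix $S$ with $|S|<k$, and let $C_1,\dots,C_m$ be the connected components of $G\setminus S$. For each $i$, define $A_i$ to be the subgraph induced by $V(C_i)\cup S$ together with all edges of $G$ incident to $V(C_i)$ (so $A_i$ contains no edges inside $S$), and let $B_i$ be the subgraph induced by $(V(G)\setminus V(C_i))$ together with all remaining edges, i.e.\ all edges not in $A_i$; since every edge of $G$ not inside $C_i$ has both endpoints outside $V(C_i)$ (an edge leaving $C_i$ goes to $S$), one checks $(A_i,B_i)$ is a separation with $V(A_i)\cap V(B_i)\subseteq S$, hence of order $<k$. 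Thus by \ref{li:gt1} exactly one of $(A_i,B_i),(B_i,A_i)$ lies in $\CT$.

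First I would show that $(B_i,A_i)\notin\CT$ for at most one index $i$. Indeed, if $(B_i,A_i)\in\CT$ and $(B_j,A_j)\in\CT$ with $i\ne j$, then since $V(C_i)$ and $V(C_j)$ are disjoint we have $A_i\cup A_j = G$ (every vertex of $G$ lies outside $V(C_i)$ or outside $V(C_j)$, and every edge lies in $A_i$ or $A_j$ by a similar case check), contradicting \ref{li:gt2} (taking the third separation to be either of the two, or any member of $\CT$). Next I would rule out $(B_i,A_i)\in\CT$ for \emph{all} $i$: that would force, combining all the $A_i$'s, $A_1\cup\dots\cup A_m\cup(\text{edges in }S)$; more carefully, if $m\ge 3$ pick three indices and derive $A_i\cup A_j\cup A_\ell = G$ unless there are edges inside $S$ unaccounted for — here I need to be slightly careful, so instead I would use Lemma~\ref{lem:tangle-closure}(1): the "small-side" separation $(B_i,A_i)$ has $|V(B_i)| = |V(G)|-|V(C_i)|$, which need not be $<k$, so that route fails directly; the clean argument is the pairwise one above plus the observation that if $(B_i,A_i)\in\CT$ for every $i$ then $\CT$ cannot satisfy \ref{li:gt3} because $\bigcup_i V(B_i)$... — actually the cleanest route is: since at most one $i$ has $(B_i,A_i)\in\CT$, there is at least one $i$ with $(A_i,B_i)\in\CT$ (as $m\ge 1$); I claim exactly one such $i$, and that is the desired $C(\CT,S)$. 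To see there is exactly one, suppose $(A_i,B_i),(A_j,B_j)\in\CT$ with $i\ne j$. Then $B_i\cap B_j$ contains no vertex of $V(C_i)\cup V(C_j)$; using Lemma~\ref{lem:tangle-closure}(3), provided $\ord(A_i\cup A_j, B_i\cap B_j)<k$ — which holds since $V(A_i\cup A_j)\cap V(B_i\cap B_j)\subseteq S$ — we get $(A_i\cup A_j, B_i\cap B_j)\in\CT$; but iterating over all components with $(A_i,B_i)\in\CT$ we would eventually reach a separation $(A^*,B^*)\in\CT$ with $V(A^*)=V(G)$, contradicting \ref{li:gt3}, unless only one component is "big." So define $C(\CT,S):=C_i$ for the unique $i$ with $(A_i,B_i)\in\CT$.

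Having fixed $C := C(\CT,S)$, I would then verify the stated equivalence. Let $(A,B)$ be any separation of $G$ with $V(A)\cap V(B)\subseteq S$. Each component $C_j$ of $G\setminus S$ is connected and disjoint from $V(A)\cap V(B)$, hence lies entirely in $A$ or entirely in $B$ (in the vertex sense; and since $C_j$ is connected its internal edges go the same way). If $C\subseteq B$, I want $(A,B)\in\CT$: note $A\subseteq B_i$ where $C_i = C$ (because $A$ contains no vertex of $V(C)\setminus S$ and no edge inside $C$), wait — I need $A\subseteq B_i$ as subgraphs, which follows since $V(A)\subseteq V(G)\setminus(V(C)\setminus S) = V(B_i)$ and $E(A)\cap E(A_i)=\emptyset$ (edges of $A_i$ are incident to $V(C)$, none of which—beyond $S$—are in $A$, and edges within $S$ are not in $A_i$). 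Hmm, edges of $A$ with both endpoints in $S$: these are not in $A_i$, good, so $E(A)\subseteq E(B_i)$. Thus $B\supseteq A_i$, and since $(A_i,B_i)\in\CT$, Lemma~\ref{lem:tangle-closure}(2) gives $(A,B)\in\CT$. Conversely if $C\not\subseteq B$ then $C\subseteq A$, and the symmetric argument shows $B_i\subseteq A$, i.e.\ $(B,A)$ has $A\supseteq B_i$; since $(A_i,B_i)\in\CT$ means $(B_i,A_i)\notin\CT$, and $A_i\subseteq B$, Lemma~\ref{lem:tangle-closure}(2) applied to $(A_i,B_i)$... gives $(B,A)\in\CT$, hence $(A,B)\notin\CT$. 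Uniqueness of $C(\CT,S)$ with this property is immediate: if $C'$ also worked, apply the equivalence to the separation $(A_i,B_i)$ for $C_i=C'$ to force $C'\subseteq B_i$, i.e.\ $C'=C$.

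The main obstacle I anticipate is the bookkeeping around edges inside $S$ and the precise subgraph (as opposed to merely vertex-set) containments — making sure each $(A_i,B_i)$ is genuinely a separation and that "$C_j$ lies on one side" is justified at the level of edges, not just vertices. The tangle-axiom manipulations themselves are routine given Lemma~\ref{lem:tangle-closure}; the only genuinely nontrivial point is showing the "big" component is \emph{unique}, for which the iterated-union argument via Lemma~\ref{lem:tangle-closure}(3) together with \ref{li:gt3} is the natural tool, and I would want to double-check that the intermediate unions all have order $<k$ (they do, since their separator stays inside $S$).
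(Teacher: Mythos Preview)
Your overall strategy---build one separation per component of $G\setminus S$, use the tangle axioms together with Lemma~\ref{lem:tangle-closure} to single out a unique ``big'' component, then verify the equivalence via Lemma~\ref{lem:tangle-closure}(2)---is exactly the paper's approach (the paper indexes by subsets $I\subseteq[m]$ and runs a minimality argument, but that is only a cosmetic difference). However, your write-up has a consistent orientation error that makes the argument, as written, incorrect. With your conventions ($V(A_i)=V(C_i)\cup S$ and $V(B_i)=V(G)\setminus V(C_i)$), the distinguished component is the unique $i$ with $(B_i,A_i)\in\CT$, \emph{not} the unique $i$ with $(A_i,B_i)\in\CT$: the ``big'' side must contain $C_i$, and that side is $A_i$.

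Concretely: your ``at most one'' argument assumes $(B_i,A_i),(B_j,A_j)\in\CT$ and derives a contradiction, so it proves that $(B_i,A_i)\in\CT$ for at most one $i$---the opposite of what you state. (And the union that equals $G$ is $B_i\cup B_j$, not $A_i\cup A_j$; your own justification ``every vertex lies outside $V(C_i)$ or outside $V(C_j)$'' is a statement about $V(B_i)\cup V(B_j)$, while $V(A_i)\cup V(A_j)=V(C_i)\cup V(C_j)\cup S$ is generally smaller than $V(G)$.) Your iterated-union argument via Lemma~\ref{lem:tangle-closure}(3) and \ref{li:gt3} then shows that not \emph{every} $i$ satisfies $(A_i,B_i)\in\CT$, i.e.\ at least one $i$ has $(B_i,A_i)\in\CT$; combined, exactly one $i$ has $(B_i,A_i)\in\CT$, and that $C_i$ is $C(\CT,S)$. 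Finally, your verification step actually needs this corrected orientation: from $B\supseteq A_i$ and Lemma~\ref{lem:tangle-closure}(2) you can conclude $(A,B)\in\CT$ only if it is $(B_i,A_i)$ that lies in $\CT$, not $(A_i,B_i)$. Once you flip these labels uniformly, the proof goes through and matches the paper's.
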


\begin{proof}
    Let $C_1,\ldots,C_m$ be the set of all connected
  components of $G\setminus S$.  For every $I\subseteq [m]$, let
  $C_I:=\bigcup_{i\in I}C_i$. We define
  a separation $(A_I,B_I)$ of $G$ as follows. $B_I$ is the graph with
  vertex set $S\cup V(C_I)$ and all edges that have at
  least one endvertex in $V(C_I)$, and $A_I$ is the
  graph with vertex set $S\cup V(C_{[m]\setminus I})$ and
  edge set
   $E(G)\setminus E(B_I)$.
 Note that $V(A_I)\cap V(B_I)=S$ and thus $\ord(A_I,B_I)<k$. Thus for
 all $I$, either $(A_I,B_I)\in\CT$ or $(B_I,A_I)\in\CT$. It follows
 from Lemma~\ref{lem:tangle-closure}(1) and \ref{li:gt2} that
 $(B_I,A_I)\in\CT$ implies 
 $(A_{[m]\setminus I}, B_{[m]\setminus I})\in\CT$, because
 $(G[S],G)\in\CT$ and $B_I\cup B_{[m]\setminus I}\cup
 G[S]=G$. Furthermore, it follows from Lemma~\ref{lem:tangle-closure}(3) that $(A_I,B_I),(A_J,B_J)\in\CT$ implies
 $(A_{I\cap J},B_{I\cap J})\in\CT$.
 By \ref{li:gt3} we have $(A_{[m]},B_{[m]})\in\CT$ and
 $(A_\emptyset,B_\emptyset)\not\in\CT$.
 
 Let $I\subseteq[m]$ be of minimum cardinality such that
 $(A_I,B_I)\in\CT$. Since $(A_I,B_I),(A_J,B_J)\in\CT$ implies
 $(A_{I\cap J},B_{I\cap J})\in\CT$, the minimum set $I$ is unique. If
 $|I|=1$, then we let $C(\CT,S):=C_i$ for the unique element $i\in
 I$.
 Suppose for contradiction that $|I|>1$, and let $i\in I$. By the
 minimality of $|I|$ we have $(A_{\{i\}},B_{\{i\}})\not\in\CT$ and
 thus $(A_{[m]\setminus\{i\}},B_{[m]\setminus\{i\}})\in\CT$. This
 implies $(A_{I\setminus\{i\}},B_{I\setminus\{i\}})\in\CT$,
 contradicting the minimality of $|I|$.
\qed
\end{proof}

Let $G$ be a graph. We say that subgraphs $C_1,\ldots,C_m\subseteq G$
\emph{touch} if there is a vertex $v\in\bigcap_{i=1}^m V(C_i)$ or an edge
$e\in E(G)$ such that each $C_i$ contains at least one endvertex of
$e$. A family $\CC$ of subgraphs of $G$ \emph{touches pairwise} if all
$C_1,C_2\in\CC$ touch, and it \emph{touches triplewise} if all
$C_1,C_2,C_3\in\CC$ touch. A \emph{vertex cover} (or \emph{hitting set})
for $\CC$ is a set $S\subseteq V(G)$ such that $S\cap
V(C)\neq\emptyset$ for all $C\in\CC$.

\begin{theorem}[Reed \cite{ree97}]\label{theo:reed}
  A graph $G$ has a $G$-tangle of order $k$ if and only if there is a
  family $\CC$ of connected subgraphs of $G$ that touches triplewise
  and has no vertex cover of cardinality less than $k$.
\end{theorem}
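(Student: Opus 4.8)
The plan is to prove the two directions separately, using Lemma~\ref{lem:reed} to connect tangles with the combinatorics of connected subgraphs and their hitting sets.

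For the forward direction, suppose $\CT$ is a $G$-tangle of order $k$. Using Lemma~\ref{lem:reed}, for every set $S\subseteq V(G)$ with $|S|<k$ we have the distinguished component $C(\CT,S)$ of $G\setminus S$. The natural candidate is the family $\CC:=\{C(\CT,S) : S\subseteq V(G),\ |S|<k\}$ of connected subgraphs of $G$. First I would check that $\CC$ has no vertex cover of size less than $k$: if $S$ were such a cover with $|S|<k$, then $S$ would in particular meet $C(\CT,S)$, but $C(\CT,S)$ is a component of $G\setminus S$ and so is disjoint from $S$ --- contradiction. Next I would show $\CC$ touches triplewise. Given $C(\CT,S_1),C(\CT,S_2),C(\CT,S_3)$, form the associated separations $(A_i,B_i)$ with $V(A_i)\cap V(B_i)=S_i$, $B_i\supseteq C(\CT,S_i)$, and $A_i=G\setminus(V(C(\CT,S_i))\setminus S_i)$, chosen so that $(A_i,B_i)\in\CT$ (this is exactly the separation built in the proof of Lemma~\ref{lem:reed}). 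If the three subgraphs did not touch, then no vertex lies in all three and no edge has an endvertex in each; I would argue this forces $A_1\cup A_2\cup A_3=G$ (every vertex outside some $S_i$-component lands in $A_i$, and every edge, lacking an endvertex in some $C(\CT,S_i)$, lies in $E(A_i)$), contradicting \ref{li:gt2}. Verifying the edge bookkeeping here carefully is the one slightly delicate point of this direction.

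For the backward direction, suppose $\CC$ is a family of connected subgraphs touching triplewise with no vertex cover of size $<k$. Define $\CT$ to be the set of all separations $(A,B)$ of $G$ of order $<k$ such that some $C\in\CC$ satisfies $C\subseteq B$. I would then verify the four tangle axioms. Axiom \ref{li:gt0} is immediate. For \ref{li:gt1}: given $(A,B)$ of order $<k$, the set $S:=V(A)\cap V(B)$ has $|S|<k$, so $S$ is not a vertex cover of $\CC$; pick $C\in\CC$ with $V(C)\cap S=\emptyset$; since $C$ is connected and $C\subseteq A\cup B$ with $V(A)\cap V(B)=S$, $C$ lies entirely in $A$ or entirely in $B$, giving $(A,B)\in\CT$ or $(B,A)\in\CT$. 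Axiom \ref{li:gt3}: if $V(A)=V(G)$ then $A\cup B$ forces, for $C\subseteq B$, that $V(C)\subseteq V(A)\cap V(B)$, which is a set of size $<k$; but $\{$that set$\}$ would then be a vertex cover smaller than... no --- rather, a single such $C$ with $|V(C)|<k$ is covered by any of its own vertices, still consistent; so instead I note $V(C)\subseteq V(A)\cap V(B)$ has size $<k$, and being a vertex cover of $\{C\}$ alone is not a contradiction --- the cleaner argument is that $V(A)\cap V(B)$ would be a vertex cover of $\CC$ of size $<k$, since it contains $V(C)$... this also needs care: let me instead use that $E(A)\cap E(B)=\emptyset$ and $C\subseteq B$ forces $E(C)\cap E(A)=\emptyset$, so the connected $C$ has $V(C)\subseteq V(B)\setminus(V(A)\setminus V(B))$; I will work out that $V(A)\ne V(G)$ follows because otherwise $V(A)\cap V(B)$, a set of size $<k$, is a vertex cover of $\CC$ (as every $C'\in\CC$ that witnesses membership of some separation... ). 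Rather than that, the robust route for \ref{li:gt3} is: if $V(A)=V(G)$, then for any $C\subseteq B$ we have $V(C)\subseteq V(B)$; combined with $E(A)\cap E(B)=\emptyset$ and $V(A)=V(G)$, every edge of $C$ must be in $E(B)$ only, which is fine, so I will derive the contradiction instead from triplewise touching applied with $C,C,C$ (or just note $|V(C)|\ge k$ fails)... I will settle this in the writeup; the essential content is axiom \ref{li:gt2}.

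For \ref{li:gt2}, suppose $(A_i,B_i)\in\CT$ for $i=1,2,3$ with witnesses $C_i\in\CC$, $C_i\subseteq B_i$, and suppose for contradiction $A_1\cup A_2\cup A_3=G$. Since $C_1,C_2,C_3$ touch, either some vertex $v\in V(C_1)\cap V(C_2)\cap V(C_3)$, or some edge $e$ with an endvertex in each $C_i$. In the first case $v\in V(B_i)$ for all $i$, while $v\in V(A_1\cup A_2\cup A_3)$ means $v\in V(A_j)$ for some $j$, which is allowed --- so I refine: if every vertex of each $C_i$ that lies in $V(A_j)$ actually lies in $V(A_j)\cap V(B_j)$, I want to push $C_i$ off the $A$'s. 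The correct argument: because $C_i\subseteq B_i$ and $E(A_i)\cap E(B_i)=\emptyset$, we get $E(C_i)\subseteq E(B_i)$ and $E(C_i)\cap E(A_i)=\emptyset$, and moreover any edge of $G$ incident to $V(C_i)\setminus V(A_i)$... Let me instead use the edge $e$: if $e$ has an endvertex in each $C_i$, then $e\notin E(A_i)$ for any $i$ (since $e$ has an endvertex in $C_i\subseteq B_i$ and if $e\in E(A_i)$ then that endvertex is in $V(A_i)\cap V(B_i)$, fine, but the other endvertex of $e$... actually $e\in E(A_i)\cap E(B_i)=\emptyset$ is what we need, which holds iff $e\notin E(A_i)$ whenever $e\in E(C_i)\subseteq E(B_i)$ --- yes: $e\in E(C_i)$ gives $e\in E(B_i)$ hence $e\notin E(A_i)$). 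So $e\notin E(A_1)\cup E(A_2)\cup E(A_3)$, contradicting $A_1\cup A_2\cup A_3=G$. In the vertex case with common vertex $v$: here I pass to the edge-free setting by observing $v\in V(C_i)\subseteq V(B_i)$; since $A_1\cup A_2\cup A_3=G$, $v\in V(A_j)$ for some $j$, so $v\in V(A_j)\cap V(B_j)$; this alone is not yet a contradiction, so I will instead argue that since each $C_i$ is connected and $C_i\subseteq B_i$ with $V(A_i)\cap V(B_i)$ small, and $v$ is common, one can find (by connectedness and $A_1\cup A_2\cup A_3=G$) an edge of some $C_i$ lying in $E(A_j)$, reducing to the previous case. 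This vertex-touching subcase --- reducing it cleanly to an edge contradiction --- is the main obstacle, and the place I would spend the most effort; everything else is routine verification.
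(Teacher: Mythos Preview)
Your forward direction is correct and essentially identical to the paper's. The backward direction, however, has a genuine gap, and your visible struggle with \ref{li:gt3} and the vertex case of \ref{li:gt2} is a symptom of it: your definition of $\CT$ is too permissive. Requiring only $C\subseteq B$ allows both $(A,B)$ and $(B,A)$ to lie in $\CT$, which immediately kills \ref{li:gt2}. Concretely, let $G$ consist of two triangles on $\{1,2,3\}$ and $\{3,4,5\}$ sharing the vertex~$3$, and take $\CC=\{T_1,e_{45}\}$ where $T_1=G[\{1,2,3\}]$ and $e_{45}$ is the edge $45$ as a subgraph. This $\CC$ touches triplewise (via the edge $34$) and has no vertex cover of size~$1$, so $k=2$. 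For the order-$1$ separation $(T_1,T_2)$ with $T_2=G[\{3,4,5\}]$ you have $e_{45}\subseteq T_2$, hence $(T_1,T_2)\in\CT$; but also $T_1\subseteq T_1$, hence $(T_2,T_1)\in\CT$. Then $T_1\cup T_2\cup T_2=G$, so your $\CT$ is not a tangle. (There is also a slip in your edge case: from ``$e$ has an endvertex in $C_i$'' you jump to $e\in E(C_i)$ and hence $e\in E(B_i)$, but touching only guarantees the former.)

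The paper instead takes $\CT$ to consist of those $(A,B)$ of order~$<k$ for which some $C\in\CC$ satisfies $C\subseteq B\setminus V(A)$, i.e.\ $V(C)\cap V(A)=\emptyset$. With this stronger requirement on the witness, everything you were fighting becomes trivial: \ref{li:gt3} holds because if $V(A)=V(G)$ no nonempty $C$ can avoid $V(A)$; and for \ref{li:gt2}, a common vertex $v\in\bigcap_i V(C_i)$ lies outside every $V(A_i)$ and hence outside $V(A_1\cup A_2\cup A_3)$, while an edge $e$ with an endvertex $v_i\in V(C_i)$ for each $i$ has $v_i\notin V(A_i)$, so $e\notin E(A_i)$, for each $i$. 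Your argument for \ref{li:gt1} already produces exactly such a witness (the $C$ you pick avoids $S=V(A)\cap V(B)$ and, being connected, therefore avoids all of $V(A)$ or all of $V(B)$), so the fix costs nothing.
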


In fact, Reed~\cite{ree97} defines a tangle of a graph $G$ to be a family
$\CC$  of connected subgraphs of $G$ that touches triplewise and its
order to be the cardinality of a minimum vertex cover. A
\emph{bramble} is a family
$\CC$  of connected subgraphs of $G$ that touches pairwise. In this
sense, a tangle is a special bramble. 

\begin{proof}[of Theorem~\ref{theo:reed}]
  For the forward direction, let $\CT$ be a $G$-tangle of order $k$.
  We let
  \[
  \CC:=\{C(\CT,S)\mid S\subseteq V(G)\text{ with }|S|<k\}.
  \]
  $\CC$ has no vertex cover of cardinality less than $k$, because if
  $S\subseteq V(G)$ with $|S|<k$ then $S\cap V(C(\CT,S))=\emptyset$. It
  remains to prove that $\CC$ touches triplewise. For $i=1,2,3$, let
  $C_i\in\CC$ and $S_i\subseteq V(G)$ with $|S_i|<k$ such that
  $C_i=C(\CT,S_i)$. Let $B_i$ be the graph with vertex set $V(C_i)\cup
  S$ and all edges of $G$ that have at least one vertex in $V(C_i)$,
  and let $A_i$ be the graph with vertex set $V(G)\setminus V(C_i)$
  and the remaining edges of $G$. Since
  $C(\CT,S_i)=C_i\subseteq B_i$, we have $(A_i,B_i)\in\CT$. Hence $A_1\cup A_2\cup
  A_3\neq G$ by \ref{li:gt2}, and this implies that $C_1,C_2,C_3$
  touch.

  \medskip
  For the backward direction, let $\CC$ be a family of connected
  subgraphs of $G$ that touches triplewise
  and has no vertex cover of cardinality less than $k$.
  We let $\CT$ be the set of all separations $(A,B)$ of $G$ of order less
  than $k$ such that $C\subseteq B\setminus V(A)$ for some
  $C\in\CC$. It is easy to verify that $\CT$ is a $G$-tangle of order
  $k$.
  \qed
\end{proof}

Let
$\CT,\CT'$ be $\kappa$-tangles. If $\CT'\subseteq\CT$, we say that
$\CT$ is an \emph{extension} of $\CT'$. 
The \emph{truncation} 
of $\CT$ to order $k\le\ord(\CT)$ is the set
$
\{(A,B)\in\CT\mid\ord(A,B)<k\},
$
which is obviously a tangle of order $k$. Observe that if $\CT$ is
an extension of $\CT'$, then $\ord(\CT')\le\ord(\CT)$, and $\CT'$ is
the truncation of $\CT$ to order $\ord(\CT')$. 

\section{Tangles and Components}

In this section, we will show that there is a one-to-one correspondence between the
tangles of order at most $3$ and the connected, biconnected, and
triconnected components of a graph. Robertson and Seymour~\cite{gm10}
established a one-to-one correspondence between tangles of order $2$ and
biconnected component. Here, we extend the picture to tangles of order
$3$.\footnote{My guess is that the result for tangles of order $3$ is known
  to other researchers in the field, but I am not aware of it being
  published anywhere.}

\subsection{Biconnected and Triconnected Components}

Let $G$ be a graph. Following \cite{cardiehun+14}, we call a set $X\subseteq V(G)$
\emph{$k$-inseparable} in $G$ if $|X|>k$ and there is no separation $(A,B)$ of $G$ of
order at most $k$ such that $X\setminus V(B)\neq\emptyset$ and
$X\setminus V(A)\neq\emptyset$. A \emph{$k$-block} of $G$ is an inclusionwise
maximal $k$-inseparable subset of $V(G)$. We call a $k$-inseparable
set of cardinality greater than $k+1$ a \emph{proper} $k$-inseparable
set and, if it is a $k$-block, a \emph{proper} $k$-block. (Recall that
a $(k+1)$-connected graph has order greater than $k+1$ by definition.)
We observe that every vertex $x$
in a proper $k$-inseparable set $X$ has degree at least $(k+1)$, because it
has $(k+1)$ internally disjoint paths to $X\setminus\{x\}$. 

A \emph{biconnected component} of $G$ is a subgraph induced by a
1-block, which is usually just called a \emph{block}.\footnote{There
  is a slight discrepancy to standard terminology here: a set
  consisting of a single isolated
  vertex is usually also called a block, but it is not a
  $1$-block, because its size is not greater than $1$.} It is easy to
see that a biconnected component $B$ either consists of a single edge
that is a bridge of $G$, or it is 2-connected. 
In the latter case, we call $B$ a \emph{proper biconnected
  component}.

The definition of triconnected components is more
complicated, because the subgraph induced by a 2-block is not
necessarily $3$-connected (even if it is a proper 2-block). 

\begin{example}
  Let $G$ be a graph obtained from the complete graph $K_4$ by
  subdividing each edge once. Then the vertices of the original $K_4$,
  which are precisely the vertices of degree $3$ in $G$, form a
  proper 2-block, but the subgraph they induce has no edges and thus is
  certainly not 3-connected.
\end{example}

It can be shown, however, that every proper
2-block of $G$ is the vertex set of a 3-connected topological
subgraph.
For a subset $X\subseteq V(G)$, we define the \emph{torso} of $X$ in
$G$ to be the graph $\torso{G}{X}$ obtained from the induced subgraph
$G[X]$ by adding an edge $vw$ for all distinct $v,w\in X$ such that there is a
connected component $C$ of $G\setminus X$ with $v,w\in N(C)$. We call
the edges in $E(\torso GX)\setminus E(G)$ the \emph{virtual edges} of
$\torso GX$. It is not hard to show that if $X$ is a 2-block of $G$
then for every connected component $C$ of $G\setminus X$ it holds that
$N(C)\le 2$; otherwise $X$
would not be an \emph{inclusionwise maximal} 2-inseparable set. This
implies that $\torso GX$ is a topological subgraph of $G$: if, for
some connected component $C$ of $G\setminus X$, 
$N(C)=\{v,w\}$ and hence $vw$ is a virtual edge of the torso,  then
there is a path from $v$ to $w$ in $C$, which may be viewed as a
subdivision of the edge $vw$ of $\torso GX$.
We call the torsos $\torso{G}{X}$ for the 2-blocks $X$ the
\emph{triconnected components} of $G$. We call a
triconnected component \emph{proper} if its order is at least $4$.

It is a well known fact, going back to MacLane~\cite{mac37} and Tutte~\cite{tut84}, that all graphs admit tree decompositions into
their biconnected and triconnected components.  Hopcroft and
Tarjan~\cite{tar72,hoptar73a} proved that the decompositions can be computed
in linear time.

\subsection{From Components to Tangles}

\begin{lemma}\label{lem:t1}
  Let $G$ be a graph and $X\subseteq V(G)$ a $(k-1)$-inseparable set of
  order $|X|>\frac{3}{2}\cdot (k-1)$. Then
  \[
  \CT^{(k)}(X):=\{(A,B)\bigmid(A,B)\text{ separation of $G$ of order
    $<k$ with }X\subseteq V(B)\big\}
  \]
  is a $G$-tangle of order $k$. 
\end{lemma}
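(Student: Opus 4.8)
The plan is to verify the four tangle axioms \ref{li:gt0}--\ref{li:gt3} for $\CT:=\CT^{(k)}(X)$. Axiom \ref{li:gt0} holds by construction. For \ref{li:gt1}, let $(A,B)$ be a separation of order $<k$; since $X$ is $(k-1)$-inseparable, we cannot have both $X\setminus V(B)\neq\emptyset$ and $X\setminus V(A)\neq\emptyset$, so $X\subseteq V(B)$ or $X\subseteq V(A)$, i.e.\ $(A,B)\in\CT$ or $(B,A)\in\CT$. For \ref{li:gt3}, if $(A,B)\in\CT$ had $V(A)=V(G)$, then $\ord(A,B)=|V(A)\cap V(B)|=|V(B)|\ge|X|>k-1$, contradicting $\ord(A,B)<k$. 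So everything reduces to \ref{li:gt2}.

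Suppose for contradiction that $(A_i,B_i)\in\CT$ for $i=1,2,3$ and $A_1\cup A_2\cup A_3=G$. Put $S_i:=V(A_i)\cap V(B_i)$ (so $|S_i|<k$) and $Y_i:=X\cap V(A_i)$; since $X\subseteq V(B_i)$ we get $Y_i=X\cap S_i$, whence $|Y_i|\le k-1<|X|$, and from $A_1\cup A_2\cup A_3=G$ we get $X=Y_1\cup Y_2\cup Y_3$. The plain bound $|X|\le|Y_1|+|Y_2|+|Y_3|\le 3(k-1)$ is too weak to contradict $|X|>\frac32(k-1)$ -- the gap is exactly the slack in the hypothesis, and it is what the complete-graph example exploits -- so I would not argue with the $(A_i,B_i)$ directly. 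Instead I would replace each by the separation bounding the component of $G\setminus S_i$ that carries the bulk of $X$. Since $|S_i|<k$ and $X\setminus S_i=X\setminus Y_i\neq\emptyset$, the $(k-1)$-inseparability of $X$ forces $X\setminus S_i$ into a single component $D_i$ of $G\setminus S_i$, and since $X\subseteq V(B_i)$ and $V(B_i)\setminus S_i$ is a union of components of $G\setminus S_i$, we have $V(D_i)\subseteq V(B_i)\setminus V(A_i)$. Let $(P_i,Q_i)$ be the separation with $V(Q_i)=V(D_i)\cup N(D_i)$, all edges of $G$ meeting $V(D_i)$ placed into $Q_i$ and the rest into $P_i$; its order is $|N(D_i)|\le|S_i|<k$. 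Its strict $Q_i$-side contains the nonempty set $X\setminus Y_i$, and its strict $P_i$-side contains $Y_i\setminus N(D_i)$; by $(k-1)$-inseparability the latter set must be empty, so $Y_i\subseteq N(D_i)$ for $i=1,2,3$.

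Now comes the combinatorial core. Fix $i$ and take $y\in Y_i\subseteq N(D_i)$, together with a neighbour $w\in V(D_i)$ of $y$. Then $w\notin V(A_i)$ because $V(D_i)\subseteq V(B_i)\setminus V(A_i)$, and the edge $yw$ lies in some $E(A_m)$ since $E(A_1)\cup E(A_2)\cup E(A_3)=E(G)$; as $w\notin V(A_i)$ we have $m\neq i$, and then $y\in V(A_m)$, so $y\in Y_m$. Hence $Y_i\subseteq\bigcup_{m\neq i}Y_m$ for every $i$, which means every vertex of $X=Y_1\cup Y_2\cup Y_3$ lies in at least two of the $Y_i$; therefore $2|X|\le|Y_1|+|Y_2|+|Y_3|\le 3(k-1)<2|X|$, a contradiction, proving \ref{li:gt2}. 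I expect the only real obstacle to be exactly this passage: the natural count from the three given separations is a factor $\frac32$ versus $2$ short, and the trick is first to trade each $(A_i,B_i)$ for the separation around the component $D_i$ holding the bulk of $X$, which upgrades the mere covering $X=\bigcup Y_i$ into the much stronger property $Y_i\subseteq\bigcup_{m\neq i}Y_m$ -- read off from the edges of $G$ -- and that is precisely what turns $3(k-1)$ into $2|X|$.
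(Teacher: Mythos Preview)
Your proof is correct, and your route to \ref{li:gt2} is genuinely different from the paper's. The paper argues directly: since each $|X\cap V(A_i)|\le k-1$ and $|X|>\frac32(k-1)$, some $x\in X$ lies in at most one $V(A_i)$, say only in $V(A_1)$; it then invokes the fan version of Menger's theorem (implicitly, via the $(k-1)$-inseparability of $X$) to obtain $k-1$ internally disjoint paths from $x$ into $X\setminus\{x\}$, and from these extracts an edge at $x$ leaving $V(A_1)$, which then lies in no $E(A_i)$. You instead argue by contradiction from $A_1\cup A_2\cup A_3=G$ and show that \emph{every} $x\in X$ lies in at least two of the $Y_i$, by first replacing $S_i$ with the tighter separator $N(D_i)$ of the component $D_i$ carrying $X\setminus S_i$, using inseparability once more to force $Y_i\subseteq N(D_i)$, and then reading off from the edge covering $E(G)=E(A_1)\cup E(A_2)\cup E(A_3)$ that each $y\in Y_i$ has an edge into $D_i$ that must land in some $A_m$ with $m\neq i$. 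Your approach has the advantage of avoiding the (implicit) appeal to Menger and of isolating the combinatorial core cleanly as a double-counting inequality; the paper's approach is more hands-on but perhaps more transparent once the fan is granted. Both exploit the same tight threshold $|X|>\frac32(k-1)$ in the same way.
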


\begin{proof}
$\CT^{(k)}(X)$ trivially satisfies
  \ref{li:gt0}. It satisfies \ref{li:gt1}, because the
  $(k-1)$-inseparability of $X$ implies that for every separation $(A,B)$ of
  $G$ of order $<k$ either $X\subseteq V(A)$ or $X\subseteq
  V(B)$.

  To see that $\CT^{(k)}(X)$ satisfies \ref{li:gt2}, let
  $(A_i,B_i)\in\CT^{k}(X)$ for $i=1,2,3$. Then $|V(A_i)\cap X|\le
  k-1$, because $V(A_i)\cap X\subseteq V(A_i)\cap V(B_i)$. 
  As $|X|>\frac{3}{2}\cdot (k-1)$, there is a vertex $x\in X$
  such that $x$ is contained in at most one of the sets $V(A_i)$. Say,
  $x\not\in V(A_2)\cup V(A_3)$. If $x\not\in V(A_1)$, then
  $V(A_1)\cup V(A_2)\cup V(A_3)\neq V(G)$. So let us assume that
  $x\in V(A_1)$.  

  Let $y_1,\ldots,y_{k-1}\in X\setminus\{x\}$. As $X$ is
  $(k-1)$-inseparable, for all $i$ there is a path $P_i$ from $x$ to $y_i$
  such that $V(P_i)\cap V(P_j)=\{x\}$ for $i\neq j$. Let
  $w_i$ be the last vertex of $P_i$ (in the direction from $x$ to
  $y_i$) that is in $V(A_1)$. We claim that $w_i\in V(B_1)$. This is the case if $w_i=y_i\in  X\subseteq
  V(B_1)$. If $w_i\neq y_i$, let $z_i$ be the successor of $w_i$ on
  $P_i$. Then $z_i\in V(B_1)\setminus V(A_1)$, and as $w_iz_i\in
  E(G)$, it follows that $w_i\in V(B_1)$ as well.

  Thus $\{x,w_1,\ldots,w_{k-1}\}\subseteq V(A_1)\cap V(B_1)$, and as
  $|V(A_1)\cap V(B_1)|\le k-1$, it follows that $w_i=x$ for some
  $i$. Consider the edge $e=xz_i$. We have $e\not\in E(A_1)$ because
  $z_i\not\in V(A_1)$ and $e\not\in E(A_2)\cup E(A_3)$ because
  $x\not\in V(A_2)\cup V(A_3)$. Hence $E(A_1)\cup E(A_2)\cup
  E(A_3)\neq E(G)$, and this completes the proof of \ref{li:gt2}.

  Finally, $\CT^{(k)}(X)$ satisfies
  \ref{li:gt3}, because for every $(A,B)\in\CT$ we have $|V(A)\cap X|\le k-1<|X|$.
\qed
\end{proof}

\begin{corollary}\label{cor:block->tangle}
  Let $G$ be a graph and $X\subseteq V(G)$.
  \begin{enumerate}
  \item If $X$ is the vertex set of a connected component of $G$
    (that is, a $0$-block), then $\CT^1(X)$ is a $G$-tangle of
    order $1$.
  \item If $X$ is the vertex set of a biconnected component of $G$
    (that is, a $1$-block), then $\CT^2(X)$ is a $G$-tangle of 
    order $2$. 
  \item If $X$ is the vertex set of a proper triconnected component of $G$
    (that is, a $2$-block of cardinality at least $4$), then $\CT^3(X)$ is a $G$-tangle of 
    order $3$. 
  \end{enumerate}
\end{corollary}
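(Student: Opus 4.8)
The plan is to obtain all three parts as immediate instances of Lemma~\ref{lem:t1}, which asserts that whenever $X\subseteq V(G)$ is $(k-1)$-inseparable and $|X|>\frac{3}{2}(k-1)$, the family $\CT^{(k)}(X)$ is a $G$-tangle of order $k$. Thus for each part I would fix the appropriate value $k\in\{1,2,3\}$, identify the vertex set $X$ in question as a $(k-1)$-block (hence in particular a $(k-1)$-inseparable set), and verify the single inequality $|X|>\frac{3}{2}(k-1)$. After that, Lemma~\ref{lem:t1} does all the work, and $\CT^{(k)}(X)=\CT^k(X)$ is the desired tangle.

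Carrying this out: for part~(1) take $k=1$. The vertex set of a connected component of $G$ is by definition a maximal $0$-inseparable set, i.e.\ a $0$-block, so in particular it is $0$-inseparable, and being nonempty it satisfies $|X|>0=\frac{3}{2}\cdot 0$; hence $\CT^1(X)$ is a $G$-tangle of order~$1$. For part~(2) take $k=2$: by hypothesis $X$ is a $1$-block, hence $1$-inseparable, and the definition of $1$-inseparability forces $|X|>1$, so $|X|\ge 2>\frac{3}{2}=\frac{3}{2}\cdot 1$, and $\CT^2(X)$ is a $G$-tangle of order~$2$. For part~(3) take $k=3$: by hypothesis $X$ is a $2$-block of cardinality at least $4$, hence $2$-inseparable with $|X|\ge 4>3=\frac{3}{2}\cdot 2$, and $\CT^3(X)$ is a $G$-tangle of order~$3$.

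I do not expect any genuine obstacle; the proof is a pure bookkeeping exercise once the identifications of connected, biconnected, and proper triconnected components with $0$-, $1$-, and (cardinality-$\ge 4$) $2$-blocks are granted, as recalled above. The one point worth stating is that the inequality in Lemma~\ref{lem:t1} is strict, so each block must clear the threshold exactly; this is precisely why part~(3) is restricted to \emph{proper} triconnected components, since a $2$-block of cardinality $3$ would only give $|X|=3$, which does not exceed $\frac{3}{2}\cdot 2=3$. No such restriction is needed in parts~(1) and~(2), because every $0$-block and every $1$-block automatically exceeds the corresponding bound $0$ and $\frac{3}{2}$.
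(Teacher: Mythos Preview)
Your proposal is correct and is exactly the intended argument: in the paper the statement is recorded as a corollary of Lemma~\ref{lem:t1} with no further proof, and your verification of the cardinality bound $|X|>\frac{3}{2}(k-1)$ in each of the three cases is precisely what is needed.
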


Let us close this section by observing that the restriction to
\emph{proper} triconnected components in assertion (3) of the
corollary is necessary.

\begin{lemma}\label{lem:exc}
  Let $G$ be a graph and $X\subseteq V(G)$ be a $2$-block of
  cardinality $3$. Then $\CT^3(X)$ is not a tangle.
\end{lemma}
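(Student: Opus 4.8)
The plan is to show directly that $\CT^3(X)$ violates the tangle axioms — specifically \ref{li:gt2} — when $X=\{x_1,x_2,x_3\}$ is a $2$-block. The key structural fact I would exploit is that a $2$-block of size $3$ must be ``witnessed'' by many paths and components attached to it, and in particular that the torso $\torso{G}{X}$ is $3$-connected (as recalled just before the statement), so $\torso{G}{X}$ must be either the triangle on $X$ or the graph obtained by adding a fourth vertex adjacent to all of $x_1,x_2,x_3$ — but as a torso of a size-$3$ block it is exactly the triangle with possibly some virtual edges, hence a graph on three vertices. The upshot is that each pair $x_i,x_j$ is joined either by an edge of $G$ or by a component $C_{ij}$ of $G\setminus X$ with $N(C_{ij})=\{x_i,x_j\}$, and moreover $X$ being $2$-\emph{inseparable} (so no separation of order $\le 2$ splits it) forces, for each pair $\{i,j\}$, a connection between $x_i$ and $x_j$ that is ``internal'' in the sense of not passing through the third vertex.

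First I would fix, for each pair $\{i,j\}\subseteq\{1,2,3\}$, a connected subgraph $D_{ij}$ of $G$ with $V(D_{ij})\cap X=\{x_i,x_j\}$: either the single edge $x_ix_j$ if it is present, or $G[V(C_{ij})\cup\{x_i,x_j\}]$ for a component $C_{ij}$ of $G\setminus X$ whose neighbourhood is exactly $\{x_i,x_j\}$ (such a component exists because the virtual edge $x_ix_j$ of the torso has to come from somewhere, and $N(C)\le 2$ for every component $C$ of $G\setminus X$ since $X$ is a maximal $2$-inseparable set). Then I would define three separations $(A_k,B_k)$ of $G$ for $k=1,2,3$ by putting into $B_k$ the ``small'' piece opposite $x_k$: concretely let $A_k$ be the subgraph with vertex set $V(G)\setminus V(D_{ij})$ together with $\{x_i,x_j\}$ (where $\{i,j\}=\{1,2,3\}\setminus\{k\}$) and the induced edges, and let $B_k=D_{ij}$ — or, if some of the $D_{ij}$ are just edges, adjust so that $B_k$ is the subgraph consisting of that edge and $A_k$ the rest. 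Each such $(A_k,B_k)$ has order $\le 2<3$, and $X\subseteq V(A_k)$ since $x_k\in V(A_k)$ and $x_i,x_j\in V(A_k)$ too; hence by \ref{li:gt1} and the $2$-inseparability of $X$ we get $(A_k,B_k)\in\CT^3(X)$ — wait, that gives membership of $(A_k,B_k)$, but \ref{li:gt2} is about the $A$-sides, so I actually want the three separations whose $A$-sides jointly cover $G$.

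So the cleaner formulation: set $(A_{ij},B_{ij})$ to be the separation with $B_{ij}$ having vertex set $V(D_{ij})$ and all edges of $G$ incident to $V(D_{ij})\setminus X$ together with the edge $x_ix_j$ if present, and $A_{ij}=G[V(G)\setminus(V(D_{ij})\setminus X)]$ with the remaining edges, so $\ord(A_{ij},B_{ij})\le 2$. Since $X\subseteq V(A_{ij})$, axiom \ref{li:gt1} forces $(A_{ij},B_{ij})\in\CT^3(X)$ for the three pairs $\{i,j\}$. Now the three subgraphs $B_{12},B_{13},B_{23}$ together contain every edge of $G$: any edge inside $X$ is some $x_ix_j$ and lies in $B_{ij}$; any edge incident to a vertex outside $X$ lies in a unique component $C$ of $G\setminus X$, and $N(C)\subseteq\{x_p,x_q\}$ for some pair, so that edge is in $B_{pq}$; edges between $X$ and a component are handled the same way. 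Hence $B_{12}\cup B_{13}\cup B_{23}=G$, which directly contradicts \ref{li:gt2} applied to the three tangle members $(B_{12},A_{12})$... no — \ref{li:gt2} says the $A$-sides cannot cover $G$, so I should instead note $(B_{ij},A_{ij})$ is the reverse separation; by the observation after the axioms, exactly one of $(A_{ij},B_{ij}),(B_{ij},A_{ij})$ is in $\CT$, and we showed $(A_{ij},B_{ij})\in\CT$, so I should set things up so the covering separations are the ones in $\CT$. The fix is purely bookkeeping: relabel so that the piece attached ``near $x_k$'' goes on the $A$-side; the substance is that $X$ has size only $3$, too small to block the three ``petals'' $D_{12},D_{13},D_{23}$ from together exhausting $G$ while all lying on the tangle-selected side.

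The main obstacle I expect is the bookkeeping around virtual edges versus real edges and making sure the three separations are genuinely separations of $G$ (i.e.\ $A_{ij}\cup B_{ij}=G$ and $E(A_{ij})\cap E(B_{ij})=\emptyset$) with order at most $2$; this requires carefully using that $N(C)\le 2$ for every component $C$ of $G\setminus X$, which in turn is exactly where maximality of the $2$-block $X$ enters. A secondary point to verify is that the $D_{ij}$ can be chosen so that no component $C$ of $G\setminus X$ is used twice and every edge is accounted for exactly once in the union — but this is automatic because the components of $G\setminus X$ partition $V(G)\setminus X$ and each has its neighbourhood inside a single pair. Once these routine checks are in place, the contradiction with \ref{li:gt2} is immediate.
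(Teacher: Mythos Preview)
Your approach is the right one and matches the paper's: exhibit three separations in $\CT^3(X)$ whose $A$-sides together cover $G$, violating \ref{li:gt2}. The orientation muddle you flag is indeed just bookkeeping; after the relabelling, for each pair $\{i,j\}$ you want a separation $(A_{ij},B_{ij})$ with the ``petal'' at $\{x_i,x_j\}$ on the $A$-side and $X\subseteq V(B_{ij})$, and membership in $\CT^3(X)$ is then read off directly from the definition --- there is no need to invoke \ref{li:gt1}, and indeed you should not, since you are not assuming $\CT^3(X)$ is a tangle.

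There is, however, a genuine gap in the coverage step. You define $D_{ij}$ as a \emph{single} component (or edge) witnessing the connection between $x_i$ and $x_j$, and then assert that every edge of $G$ lies in one of the three petals because ``any edge incident to a vertex outside $X$ lies in a unique component $C$ with $N(C)\subseteq\{x_p,x_q\}$, so that edge is in $B_{pq}$''. But your $B_{pq}$ (respectively $A_{pq}$ after relabelling) contains only the \emph{one} component $C_{pq}$ you selected; any other component $C'$ with $N(C')\subseteq\{x_p,x_q\}$ --- and there may well be several --- is not in it, so its edges are unaccounted for, and the union need not be all of $G$. Your final paragraph claims this is ``automatic'', but it is not with the definition you gave. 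The fix, and this is exactly what the paper does, is to let the $A$-side at $\{x_i,x_j\}$ absorb \emph{all} components of $G\setminus X$ whose neighbourhood is contained in $\{x_i,x_j\}$: set $Y_{ij}:=\bigcup\{V(C):C\text{ a component of }G\setminus X,\ N(C)\subseteq\{x_i,x_j\}\}$ and $A_{ij}:=G[Y_{ij}\cup\{x_i,x_j\}]$. Then the fact that every component of $G\setminus X$ has $|N(C)|\le 2$ (which is where maximality of the $2$-block enters, as you correctly note) immediately gives $A_{12}\cup A_{13}\cup A_{23}=G$. With this correction your argument goes through and coincides with the paper's.
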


\begin{proof}
  Let $\CT:=\CT^3(X)$.
  Suppose that $X=\{x_1,x_2,x_3\}$. For $i\neq j$, let
    $S_{ij}:=\{x_i,x_j\}$, and let $Y_{ij}$ be the union of the vertex
    sets of all connected components $C$ of $G\setminus X$ with
    $N(C)\subseteq S_{ij}$, and let
    $Z_{ij}:=V(G)\setminus (Y_{ij}\cup S_{ij})$. Let
    $A_{ij}:=G[Y_{ij}\cup S_{ij}]$, and let $B_{ij}$ be the graph with
    vertex set $S_{ij}\cup Z_{ij}$ and edge set
    $E(G)\setminus E(A_{ij})$.  Then $(A_{ij},B_{ij})\in\CT$, because
    $X\subseteq V(B_{ij})$.  As 
    $X$ is a 2-block, for every connected component $C$ of
    $G\setminus X$ it holds that $|N(C)|\le2$, and hence
    $C\subseteq A_{ij}$ for some $i,j$. It is not hard to see that
    this implies $A_{12}\cup A_{13}\cup A_{23}=G$. Thus $\CT$ violates
    \ref{li:gt2}.
    \qed
\end{proof}
  
\subsection{From Tangles to Components}
For a $G$-tangle $\CT$, we let 
\[
X_{\CT}:=\bigcap_{(A,B)\in\CT} V(B).
\]
In general, $X_{\CT}$ may be empty; an example is the tangle of order
$k$ associated with a $(k\times k)$-grid for $k\ge 5$ (see
Example~\ref{exa:grid}). However, it turns out that for tangles of
order $k\le 3$, the set $X_{\CT}$ is a $(k-1)$-block. This will be the main
result of this section.

\begin{lemma}\label{lem:X}
  Let $\CT$ be a $G$-tangle of order $k$. If $|X_{\CT}|\ge k$, then
  $X_{\CT}$ is a $(k-1)$-block of $G$ and $\CT=\CT^k(X_{\CT})$.
\end{lemma}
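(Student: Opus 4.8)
The plan is to prove the two assertions in turn, with the bulk of the work going into showing that $X_\CT$ is $(k-1)$-inseparable; maximality and the identity $\CT=\CT^k(X_\CT)$ should then follow quickly. First I would show that $X_\CT$ is $(k-1)$-inseparable. Since we assume $|X_\CT|\ge k$, suppose for contradiction that there is a separation $(A,B)$ of $G$ of order at most $k-1$ with $X_\CT\setminus V(A)\ne\emptyset$ and $X_\CT\setminus V(B)\ne\emptyset$. By \ref{li:gt1}, one of $(A,B)$, $(B,A)$ lies in $\CT$; say $(A,B)\in\CT$. Then by definition of $X_\CT$ we have $X_\CT\subseteq V(B)$, contradicting $X_\CT\setminus V(B)\ne\emptyset$. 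Hence $X_\CT$ is $(k-1)$-inseparable, and since $|X_\CT|\ge k>k-1$ it qualifies (cardinality greater than $k-1$).

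Next I would establish $\CT=\CT^k(X_\CT)$. The inclusion $\CT\subseteq\CT^k(X_\CT)$ is immediate: any $(A,B)\in\CT$ has order $<k$ by \ref{li:gt0}, and $X_\CT\subseteq V(B)$ by definition of $X_\CT$, so $(A,B)\in\CT^k(X_\CT)$. For the reverse inclusion, I would first invoke Lemma~\ref{lem:t1}: since $X_\CT$ is $(k-1)$-inseparable of order $|X_\CT|\ge k$, and $k\le 3$ gives $|X_\CT|\ge k>\frac32(k-1)$ (check: $k=1$ gives $1>0$, $k=2$ gives $2>\tfrac32$, $k=3$ gives $3>3$ — wait, equality at $k=3$), so here one must be slightly careful and I would instead argue directly. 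Given $(A',B')\in\CT^k(X_\CT)$, by \ref{li:gt1} either $(A',B')\in\CT$ or $(B',A')\in\CT$; in the latter case $X_\CT\subseteq V(A')$ by definition of $X_\CT$, combined with $X_\CT\subseteq V(B')$ this forces $X_\CT\subseteq V(A')\cap V(B')$, so $|X_\CT|\le\ord(A',B')<k$, contradicting $|X_\CT|\ge k$. Hence $(A',B')\in\CT$, giving $\CT^k(X_\CT)\subseteq\CT$.

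Finally I would prove maximality of $X_\CT$ among $(k-1)$-inseparable sets. Suppose $X_\CT\subsetneq Y$ with $Y$ also $(k-1)$-inseparable, and pick $y\in Y\setminus X_\CT$. By definition of $X_\CT$ there is $(A,B)\in\CT$ with $y\notin V(B)$, so $y\in V(A)\setminus V(B)$. On the other hand $X_\CT\subseteq V(B)$ and $X_\CT$ is nonempty (indeed $|X_\CT|\ge k\ge 1$), so picking any $x\in X_\CT\subseteq Y$ we get $x\in V(B)\setminus V(A)$ unless $x\in V(A)$ as well. To make this clean I would use that $X_\CT$ has size at least $k$ and $\ord(A,B)<k$, so $X_\CT\not\subseteq V(A)\cap V(B)$; hence there is $x\in X_\CT\subseteq Y$ with $x\notin V(A)$, i.e. $x\in V(B)\setminus V(A)$, while $y\in V(A)\setminus V(B)$. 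Then $(A,B)$ is a separation of order $<k$ with $Y\setminus V(A)\ne\emptyset$ and $Y\setminus V(B)\ne\emptyset$, contradicting the $(k-1)$-inseparability of $Y$. So $X_\CT$ is a $(k-1)$-block.

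\textbf{Main obstacle.}
The only delicate point is the step $\CT^k(X_\CT)\subseteq\CT$: the naive route via Lemma~\ref{lem:t1} hits the boundary case $|X_\CT|=k=3$, where the hypothesis $|X|>\frac32(k-1)$ of that lemma fails. The fix described above — arguing directly from \ref{li:gt1} and the order bound $\ord(A',B')<k\le|X_\CT|$ — sidesteps this entirely and does not even need Lemma~\ref{lem:t1}. Everything else is bookkeeping with the definitions of $X_\CT$, inseparability, and the tangle axioms.
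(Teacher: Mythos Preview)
Your proof is correct and follows essentially the same approach as the paper: inseparability via \ref{li:gt1}, the equality $\CT=\CT^k(X_\CT)$ via the observation that $X_\CT\subseteq V(A')\cap V(B')$ would force $|X_\CT|<k$, and maximality via a separation witnessing that any proper superset splits. The digression about Lemma~\ref{lem:t1} and the restriction $k\le 3$ is unnecessary (the lemma is stated for arbitrary $k$, and your direct argument already handles all $k$), but since you abandon that route anyway it does no harm.
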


\begin{proof}
  Suppose that $|X_{\CT}|\ge k$.  If $(A,B)$ is a
  separation of $G$ of order less than $k$ then either $(A,B)\in\CT$ or
  $(B,A)\in\CT$, which implies $X_{\CT}\subseteq V(B)$ or
  $X_{\CT}\subseteq V(A)$. Thus $X_{\CT}$ is $(k-1)$-inseparable.
  If $X\supset X_{\CT}$,
  say, with $x\in X\setminus X_{\CT}$, then there is some separation
  $(A,B)\in\CT$ with $x\in V(A)\setminus V(B)$ and $X_{\CT}\subseteq
  V(B)$, and this implies that $X$ is not $(k-1)$-inseparable. Hence
  $X_{\CT}$ is a $k$-block.

  We have $\CT=\CT^k(X_\CT)$, because $X_{\CT}\subseteq V(B)$ for all
  $(A,B)\in\CT$, and for a separation $(A,B)$ of order at most $k-1$
  we cannot have $X_{\CT}\subseteq V(A)\cap V(B)$.
  \qed
\end{proof}

Let $\CT$ be a $G$-tangle.  A separation $(A,B)\in\CT$ is
\emph{minimal} in $\CT$ if there is no $(A',B')\in\CT$ such that
$B'\subset B$. Clearly, $X_{\CT}$ is the intersection of all sets
$V(B)$ for minimal $(A,B)\in\CT$. Hence if we want to understand
$X_{\CT}$, we can restrict our attention to the minimal separations in $\CT$.
Let $(A,B)\in\CT$ be minimal and $S:=V(A)\cap V(B)$. It
follows from Lemma~\ref{lem:reed} that $B\setminus S= C:=C(\CT,S)$,
and it follows from the minimality that $S=N\big(C)$ and that $E(B)$
consists of all edges with one endvertex in $V(C)$. Hence $B$ is
connected.

\begin{theorem}[Robertson and Seymour~\cite{gm10}]\label{theo:2tangle->block}
   Let $G$ be a graph.
  \begin{enumerate}
  \item For every $G$-tangle $\CT$ of order $1$, the set $X_{\CT}$ is
    a vertex set of a connected component of $G$, and we have $\CT=\CT^1(X_{\CT})$.
  \item For every $G$-tangle $\CT$ of order $2$, the set $X_{\CT}$ is
    the vertex set of a biconnected component of $G$, and we have $\CT=\CT^2(X_{\CT})$.
  \end{enumerate}
\end{theorem}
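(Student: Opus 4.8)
\textbf{Plan for the proof of Theorem~\ref{theo:2tangle->block}.}
The two assertions have the same shape, so I would handle them in parallel, relying on Lemma~\ref{lem:X}: it suffices to show that $|X_{\CT}|\ge k$ (with $k=1$ in (1) and $k=2$ in (2)), since then Lemma~\ref{lem:X} immediately gives that $X_{\CT}$ is a $(k-1)$-block and that $\CT=\CT^k(X_{\CT})$. A $0$-block is exactly a vertex set of a connected component with more than one vertex, and a $1$-block is exactly a vertex set of a biconnected component; so the only real content is to rule out $X_{\CT}$ being too small, and then to match the block back up with the named components, taking care of the small degenerate cases (isolated vertices, bridges) that the definitions in Section~4.1 explicitly exclude.

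For assertion (1), let $\CT$ be a $G$-tangle of order $1$; then $\CT$ consists only of separations of order $0$, i.e.\ pairs $(A,B)$ with $V(A)\cap V(B)=\emptyset$, so $A$ and $B$ are vertex-disjoint unions of connected components of $G$ whose union is $G$. By \ref{li:gt1} every such partition of the components is decided by $\CT$, and by \ref{li:gt2} (with the trivial separation $(\emptyset,G)$ thrown in, using Lemma~\ref{lem:tangle-closure}(1)) and \ref{li:gt3}, $\CT$ must consistently point all of these to one fixed connected component $C$ with $V(C)\ne\emptyset$; moreover $|V(C)|\ge 2$, since otherwise taking $A=C$, $B=G\setminus V(C)$ (order $0$) would put a separation with $V(A)=V(C)$ into $\CT$ in violation of \ref{li:gt3}. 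Then $X_{\CT}=\bigcap_{(A,B)\in\CT}V(B)=V(C)$ has cardinality $\ge 1=k$, Lemma~\ref{lem:X} applies, and $V(C)$ is a $0$-block, i.e.\ the vertex set of a connected component of $G$.

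For assertion (2) the argument is the same in spirit but uses the structural remarks just before the theorem. Let $\CT$ have order $2$. Pick a minimal $(A,B)\in\CT$ and set $S:=V(A)\cap V(B)$, so $|S|\le 1$; by the paragraph before the theorem, $B\setminus S=C:=C(\CT,S)$, $S=N(C)$, and $B$ is connected. The key step is to show $|V(B)|\ge 2$, equivalently $V(B)\ne S$: if $V(B)=S$ then $|V(B)|\le 1$ and $\CT^2(V(B))$ would fail to be a tangle — concretely, $(A,B)$ itself would have $V(A)=V(G)$, contradicting \ref{li:gt3}; more to the point one shows $V(B)\ne S$ and also, using Corollary~\ref{cor:tangle-closure} applied to $(A,B)$ with itself, that $|V(B)|\ge 2$. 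Hence $|X_{\CT}|\ge 2 = k$ once we know $X_{\CT}$ equals the intersection of the (connected, by the remark) sets $V(B)$ over minimal separations and that this intersection still has $\ge 2$ vertices; the cleanest route is simply: $X_{\CT}\supseteq$ some $2$-element inseparable set, so $|X_{\CT}|\ge 2$, then invoke Lemma~\ref{lem:X} to conclude $X_{\CT}$ is a $1$-block and $\CT=\CT^2(X_{\CT})$, and finally note that a $1$-block is by definition the vertex set of a biconnected component.

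I expect the main obstacle to be the lower bound $|X_{\CT}|\ge k$ — more precisely, showing that $X_{\CT}$ cannot shrink to fewer than $k$ vertices across \emph{all} separations in $\CT$ simultaneously. For $k=1$ this is easy. For $k=2$, the natural worry is that different minimal separations might have essentially disjoint "small sides", so that their intersection $\bigcap V(B)$ collapses below $2$; the remedy is the closure property Lemma~\ref{lem:tangle-closure}(3), which says that for $(A,B),(A',B')\in\CT$ with $\ord(A\cup A',B\cap B')<k$ we get $(A\cup A',B\cap B')\in\CT$, together with Corollary~\ref{cor:tangle-closure} ($|B\cap B'|\ge k$). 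So I would argue: if $(A,B),(A',B')$ are two minimal separations in $\CT$ with $S=N(C)$, $S'=N(C')$ of size $\le 1$, then either $B\subseteq B'$ or $B'\subseteq B$ or their "join" has order $<2$ and lands in $\CT$; pushing this through shows all minimal $V(B)$ share a common $\ge 2$-element core, which is $X_{\CT}$. The bookkeeping around the degenerate cases — a single bridge edge forming a "biconnected component" that is not a $1$-block, and isolated vertices not being $0$-blocks — is where I would be most careful to keep the statement as phrased ("vertex set of a (bi)connected component") consistent with the $(k-1)$-block obtained from Lemma~\ref{lem:X}.
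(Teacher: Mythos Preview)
Your overall strategy --- reduce to $|X_{\CT}|\ge k$ and invoke Lemma~\ref{lem:X} --- matches the paper's, and you correctly identify uncrossing via Lemma~\ref{lem:tangle-closure}(3) as the key tool for (2). But there are genuine gaps.

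For (1), your claim $|V(C)|\ge 2$ is false: an isolated vertex $v$ yields a valid order-$1$ tangle with $X_{\CT}=\{v\}$. Your justification is also confused: the separation $(C,G\setminus V(C))$ is \emph{not} in $\CT$ (the tangle points toward $C$, not away from it), and \ref{li:gt3} forbids $V(A)=V(G)$, not small $V(A)$. Fortunately only $|X_{\CT}|\ge 1$ is needed, and that is immediate from $X_{\CT}=V\big(C(\CT,\emptyset)\big)$ via Lemma~\ref{lem:reed}. Your worries about degenerate cases are likewise misplaced: in the paper's definitions an isolated vertex \emph{is} a $0$-block and a bridge with its two endvertices \emph{is} a $1$-block, so no extra bookkeeping is required.

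For (2), the gap is in ``pushing this through''. Corollary~\ref{cor:tangle-closure} gives $|B\cap B'|\ge 2$ for each \emph{pair} of separations in $\CT$, but that does not produce two \emph{fixed} vertices surviving the intersection over \emph{all} minimal separations; and merely knowing that the ``join'' $(A_1\cap A_2,B_1\cup B_2)$ lands in $\CT$ is not enough either. The paper's argument in fact runs in the opposite direction from your sketch: for distinct minimal $(A_1,B_1),(A_2,B_2)$ with $V(A_i)\cap V(B_i)=\{s_i\}$, minimality forces $\ord(A_1\cup A_2,B_1\cap B_2)\ge 2$ (else Lemma~\ref{lem:tangle-closure}(3) would yield a strictly smaller $B$), hence $s_1\ne s_2$ and $V(A_1\cup A_2)\cap V(B_1\cap B_2)=\{s_1,s_2\}$, which then forces $\ord(A_1\cap A_2,B_1\cup B_2)=0$. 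The step you are missing entirely is the use of the order-$1$ truncation: one first passes to the connected component $C$ with $V(C)=X_{\CT'}$, and since $C$ is connected and meets $B_1\cup B_2$, the order-$0$ separation $(A_1\cap A_2,B_1\cup B_2)$ gives $A_1\cap A_2\cap C=\emptyset$, i.e.\ $A_i\cap C\subseteq B_j$ for all $i,j$. In particular every $s_i$ lies in every $V(B_j)$, so $\{s_1,\ldots,s_m\}\subseteq X_{\CT}$, giving $|X_{\CT}|\ge 2$ whenever $m\ge 2$ (the case $m=1$ is handled by Corollary~\ref{cor:tangle-closure} as you note). Without restricting to $C$ you cannot exclude $A_1\cap A_2$ picking up other components of $G$, and your trichotomy never produces the common pair of vertices the argument needs.
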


\begin{proof}
  To prove (1), let $\CT$ be a $G$-tangle of order $1$. Let
  $C=C(\CT,\emptyset)$. Then $(G\setminus V(C),C)$ is the unique
  minimal separation in $\CT$, and thus we have $X_{\CT}=V(C)$.

  To prove (2), let $\CT$ be a $G$-tangle of order $2$. By
  Lemma~\ref{lem:X}, it suffices to prove that $|X_{\CT}|\ge 2$. Let $\CT'$
  be the truncation of $\CT$ to order $1$. Then $W:=X_{\CT'}$ is the
  vertex set of a connected component $C$ of $G$, and we have
  $X_{\CT}\subseteq W$. Moreover, for every minimal $(A,B)\in\CT$ we
  have $B\subseteq C$, because $B$ is connected and $B\cap
  C\neq\emptyset$ by \ref{li:gt2}. 

  \begin{figure}
    \centering
    \begin{tikzpicture}[scale=0.8]

      \begin{scope}
        \fill[black!10] (-1.75,-1.75) rectangle (1.75,1.75);
        \fill[black!20] (-1.75,-0.25) rectangle (1.75,0.25);
        \fill[black!20] (-0.25,-1.75) rectangle (0.25,1.75);

      \draw[thick] (-1.75,-1.75) rectangle (1.75,1.75)
                   (-0.25,-1.75) rectangle (0.25,1.75)
                   (-1.75,-0.25) rectangle (1.75,0.25);
      
      \path (-1.76,1) node[anchor=east] {$B_1$}             
            (-1.76,0) node[anchor=east]  {$A_1\cap B_1$}        
            (-1.76,-1) node[anchor=east]  {$A_1$}  
            (-1,2) node {$B_2$}
            (0,2.5) node[rotate=270] {$A_2\cap B_2$}
            (1,2) node {$A_2$}
      ;

      \fill (-1,0) circle (2pt) node[right] {$s_1$};
      \fill (0,1) circle (2pt) node[below] {$s_2$};;
      \end{scope}

    \end{tikzpicture}
    \caption{Proof of Theorem~\ref{theo:2tangle->block}}
    \label{fig:cross2}
  \end{figure}
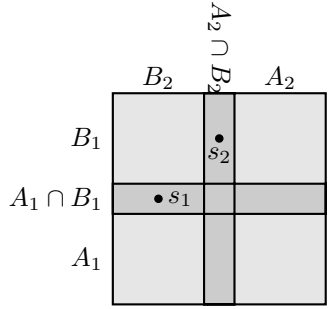

  \begin{claim} 
    Let $(A_1,B_1),(A_2,B_2)\in\CT$ be distinct and minimal in $\CT$. Then
    $A_1\cap C\subseteq B_2$ and $A_2\cap C\subseteq B_1$.

    \proof
    We have $\ord(A_1\cup A_2,B_1\cap B_2)\ge 2$, because otherwise
    $(A_1\cup A_2,B_1\cap B_2)\in\CT$ by
    Lemma~\ref{lem:tangle-closure}(3), which contradicts the
    minimality of the separations $(A_i,B_i)$. Suppose that $V(A_i)\cap
    V(B_i)=\{s_i\}$. As 
    \[
    V(A_1\cup A_2)\cap( V(B_1\cap
    B_2)\subseteq V(A_1\cap B_1)\cup V(A_2\cap
    B_2)=\{s_1,s_2\},
    \]
    we must have $s_1\neq s_2$ and $V(A_1\cup A_2)\cap V(B_1\cap
    B_2)=\{s_1,s_2\}$ (see Figure~\ref{fig:cross2}). This implies $V(A_1\cap A_2)\cap
    V(B_1\cup B_2)=\emptyset$. Then $(A_1\cap A_2,B_1\cup B_2)$ is a
    separation of $G$ of order $0$, and as $C$ is connected and
    $(B_1\cup B_2)\cap C\neq\emptyset$, we have $A_1\cap A_2\cap
    C=\emptyset$. The assertion of the claim follows.
    \uend
  \end{claim}

  Let $(A_1,B_1),\ldots,(A_m,B_m)$ be an enumeration of all minimal
  separations in $\CT$ of order $1$. Even if $C$ is $1$-inseparable,
  there is such a separation:
  $\big(G\setminus (V(C)\setminus\{v\}),C\big)$ for an arbitrary
  $v\in V(C)$.  Thus $m\ge 1$. If $m=1$, then $X_{\CT}=V(B_1)$ and
  thus $|X_{\CT}|\ge 2$ by Lemma~\ref{lem:tangle-closure}(1).

  If $m\ge 2$, let $A_i\cap B_i=\{s_i\}$. We can
  assume the $s_i$ to be mutually distinct, because if $s_i=s_j$ then
  $B_i=B_j$. It follows from Claim~1 that
  $s_1,\ldots,s_m\in \bigcap_iV(B_i)=X_{\CT}$. This implies
  $|X_{\CT}|\ge 2$.  \qed
\end{proof}

To extend Theorem~\ref{theo:2tangle->block} to tangles of order $3$,
we first prove a lemma, which essentially says that we can restrict our
attention to 2-connected graphs. Let $G$ be graph and
$X\subseteq V(G)$. For every $A\subseteq G$, let
$A\cap X:=A\big[V(A)\cap X]$. Note that if $(A,B)$ is a separation of
$G$, then $(A\cap X,B\cap X)$ is a separation of $G[X]$ with
$\ord(A\cap X,B\cap X)\le\ord(A,B)$.

\begin{lemma}
  Let $\CT$ be a $G$-tangle of order $3$. Let $\CT'$ be the truncation
  of $\CT$ to order $2$, and let $W:=X_{\CT'}$. Let $\CT[W]$ be the
  set of all separations $(A\cap W ,B\cap W)$ of $G[W]$ where
  $(A,B)\in\CT$. Then $\CT[W]$ is a $G[W]$-tangle of order
  $3$. Furthermore, $X_{\CT}=X_{\CT[W]}$.
\end{lemma}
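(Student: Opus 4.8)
The plan is: first pin down $W$ via Theorem~\ref{theo:2tangle->block}, dispose of the ``furthermore'' clause, observe that $G[W]$ is $2$-connected, and then verify the tangle axioms \ref{li:gt0}--\ref{li:gt3} for $\CT[W]$ — the last one being the real work.

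\emph{Setup.} Since $\CT'$ is a $G$-tangle of order $2$, Theorem~\ref{theo:2tangle->block}(2) gives that $W=X_{\CT'}$ is the vertex set of a biconnected component, $\CT'=\CT^2(W)$, and, as $\CT'\subseteq\CT$, also $X_{\CT}\subseteq W$. The ``furthermore'' claim is then immediate: $X_{\CT[W]}=\bigcap_{(A,B)\in\CT}\bigl(V(B)\cap W\bigr)=\bigl(\bigcap_{(A,B)\in\CT}V(B)\bigr)\cap W=X_{\CT}\cap W=X_{\CT}$. I also record two facts. (i) Every component $C$ of $G\setminus W$ has $|N(C)|\le1$; otherwise, for a suitable $c\in C$ (one not separated from $W$ by any single vertex), $W\cup\{c\}$ is still $1$-inseparable, contradicting maximality of the $1$-block $W$. (ii) $|W|\ge3$, so $G[W]$ is $2$-connected. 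For (ii) I must rule out $G[W]$ being a bridge $uv$: then $C:=C(\CT,\{u,v\})$ (Lemma~\ref{lem:reed}) lies on one side of the bridge, and the minimal separation exhibiting $C$ together with the order-$\le1$ separations that $\CT'=\CT^2(\{u,v\})\subseteq\CT$ forces into $\CT$ give three separations of $\CT$ whose union of $A$-sides is all of $G$, contradicting \ref{li:gt2}.

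\emph{Axioms \ref{li:gt0} and \ref{li:gt1}.} Axiom \ref{li:gt0} is just $\ord(A\cap W,B\cap W)\le\ord(A,B)<3$, as noted before the lemma. For \ref{li:gt1}, extend a separation $(P,Q)$ of $G[W]$ of order $<3$ to a separation $(A,B)$ of $G$ of the same order by keeping the common part $V(P)\cap V(Q)$ and sending each component $C$ of $G\setminus W$ to the $B$-side exactly when $N(C)=\{w_C\}$ with $w_C\in V(Q)\setminus V(P)$ (all other components go to the $A$-side); then $A\cap W=P$ and $B\cap W=Q$, so applying \ref{li:gt1} to $\CT$ yields $(P,Q)\in\CT[W]$ or $(Q,P)\in\CT[W]$.

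\emph{Axiom \ref{li:gt2}.} Suppose $(P_1,Q_1),(P_2,Q_2),(P_3,Q_3)\in\CT[W]$ with $P_1\cup P_2\cup P_3=G[W]$; I derive a contradiction with \ref{li:gt2} for $\CT$. Fix witnesses $(A_i,B_i)\in\CT$ with $A_i\cap W=P_i$. Using Lemma~\ref{lem:tangle-closure}(2), enlarge each $B_i$ to absorb every component $C$ of $G\setminus W$ with $w_C\in V(B_i)$; this only shrinks the separator (so the separation stays in $\CT$) and leaves $A_i\cap W=P_i$ unchanged. After this normalisation — and, if necessary, a harmless reduction to the case that $G$ is connected, so that no component of $G\setminus W$ has empty neighbourhood — each component $C$ of $G\setminus W$ lies entirely on one side of $(A_i,B_i)$, namely on the $A_i$-side whenever $w_C\in V(A_i)$. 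Since $\bigcup_iV(P_i)=W$, every $w_C$ lies in some $V(P_i)\subseteq V(A_i)$, so $C\subseteq A_i$; combined with $\bigcup_iP_i=G[W]$ this gives $A_1\cup A_2\cup A_3\supseteq G[W]\cup\bigl(\bigcup_C C\bigr)=G$, contradicting \ref{li:gt2}.

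\emph{Axiom \ref{li:gt3} (the main obstacle).} I must show that no $(A,B)\in\CT$ has $G[W]\subseteq A$ (equivalently $A\cap W\neq G[W]$). Suppose one does. Its order is not $\le1$: otherwise $(A,B)\in\CT'=\CT^2(W)$ would force $W\subseteq V(A)\cap V(B)$, impossible since $|W|\ge3$. So the order is $2$; put $S=V(A)\cap V(B)$ and $C:=C(\CT,S)$. From Lemma~\ref{lem:reed} and $G[W]\subseteq A$ one gets $C\cap W=\emptyset$, and, replacing $(A,B)$ by the minimal separation at $S$ (which still has $G[W]\subseteq A$, as no edge of $G[W]$ meets $V(C)$), one has $B\setminus S=C$, $E(B)$ equal to the edges meeting $V(C)$, and $C\subseteq C(\CT,\emptyset)$; since $W\subseteq C(\CT,\emptyset)$ by Theorem~\ref{theo:2tangle->block}(1), $C$ is nonempty with $\emptyset\neq N(C)\subseteq S$. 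Now distinguish on $|N(C)|$. If $|N(C)|=2$, then as in fact (i) some $c\in C$ cannot be separated from $W$ by a single vertex, so $W\cup\{c\}$ is $1$-inseparable, contradicting maximality of $W$. If $|N(C)|=1$, say $N(C)=\{s\}$, then the order-$1$ separation $(\hat B,\hat A)$ with $\hat B=G[\{s\}\cup V(C)]$ lies in $\CT$ — its reverse has order $<2$ and $W\not\subseteq\{s\}\cup V(C)$, so is excluded by $\CT'=\CT^2(W)\subseteq\CT$ — while $A\cup\hat B=G$ and $\ord(A\cup\hat B,B\cap\hat A)=|S|=2<3$, so Lemma~\ref{lem:tangle-closure}(3) puts the separation $(A\cup\hat B,\,B\cap\hat A)=(G,\text{ the edgeless graph on }S)$ into $\CT$, contradicting \ref{li:gt3} for $\CT$. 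Either way we have a contradiction, so \ref{li:gt3} holds for $\CT[W]$. I expect this last axiom to be the hard part: it genuinely uses that $W$ is a $1$-block (for the $|N(C)|=2$ branch) and Lemma~\ref{lem:tangle-closure}(3) together with \ref{li:gt3} for $\CT$ (for the $|N(C)|=1$ branch), plus the reduction to the minimal separation at $S$; by contrast, \ref{li:gt2} is routine bookkeeping and \ref{li:gt0}--\ref{li:gt1} are purely formal.
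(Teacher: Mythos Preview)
Your proof has two genuine gaps, both stemming from the fact that the separator $S=V(A)\cap V(B)$ of a separation in $\CT$ need not lie inside $W$.

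\textbf{The \ref{li:gt2} argument.} Your normalisation does not achieve what you claim. After enlarging $B_i$ to absorb every component $C$ of $G\setminus W$ with $w_C\in V(B_i)$, it is \emph{not} true that each remaining component lies on the $A_i$-side whenever $w_C\in V(A_i)$. If $w_C$ lies in the separator $V(A_i)\cap V(B_i)$, then $w_C\in V(B_i)$ and you have just pushed $C$ to the $B_i$-side, not the $A_i$-side. And if $w_C\in V(A_i)\setminus V(B_i)$, the normalisation never touched $C$ at all; such a $C$ can still have vertices in the original separator (there was nothing forcing $S_i\subseteq W$), hence still straddle. So your conclusion $C\subseteq A_i$, and with it $A_1\cup A_2\cup A_3=G$, does not follow.

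\textbf{The \ref{li:gt3} argument, case $|N(C)|=2$.} You invoke the argument behind fact~(i), but that argument needs the two neighbours of $C$ to lie in $W$. Here $N(C)=S$, and nothing you have proved forces $S\subseteq W$. If, say, $s_2\notin W$, then $s_2$ lies in the same component $\tilde C$ of $G\setminus W$ as $C$; by fact~(i) we have $|N(\tilde C)|\le1$, so \emph{every} vertex of $C$ is separated from $W$ by that single vertex, and no $c$ with the claimed property exists. (Your $|N(C)|=1$ case, by contrast, is fine.)

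\textbf{What the paper does differently.} The paper reverses your order: it first proves, as a short Claim, that $W\setminus V(A)\neq\emptyset$ for every $(A,B)\in\CT$. The device you are missing is the family of auxiliary order-$\le 1$ separations $(A_w,B_w)\in\CT'\subseteq\CT$ with $A_w=G[Y_w\cup\{w\}]$. If $W\subseteq V(A)$ and $S=\{s_1,s_2\}$, choose $w_i\in W$ with $s_i\in Y_{w_i}\cup\{w_i\}$ (so $w_i=s_i$ if $s_i\in W$, and otherwise $w_i$ is the attachment point of the component of $G\setminus W$ containing $s_i$); then $A\cup A_{w_1}\cup A_{w_2}=G$, contradicting \ref{li:gt2} for $\CT$. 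This handles $s_i\notin W$ for free. The Claim immediately gives \ref{li:gt3} for $\CT[W]$, and is then the key input for \ref{li:gt2}: pass to minimal $(A_i,B_i)$ so that $C_i:=B_i\setminus V(A_i)$ is connected, use the Claim to get $V(C_i)\cap W\neq\emptyset$, and transfer to $W$ the vertex or edge witnessing $A_1\cup A_2\cup A_3\neq G$. Incidentally, your fact~(ii) ($|W|\ge3$) is true but unnecessary; $|W|\ge2$ already rules out the order~$\le1$ case in your \ref{li:gt3} argument.
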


\begin{proof}
  By Theorem~\ref{theo:2tangle->block}, $G[W]$ is a
  biconnected component of $G$. This implies that
  $|W|\ge 2$ and $|N(C)|\le1$ for every connected component $C$ of
  $G\setminus W$. For every $w\in W$, we let $Y_w$ be union of the vertex
  sets of all connected components $C$ of $G\setminus W$ with
  $N(C)\subseteq \{w\}$. Then $V(G)=W\cup\bigcup_{w\in W}Y_w$. Let $Z_w:=V(G)\setminus
  (Y_w\cup\{w\})$.
  Let $A_w:=G[Y_w\cup\{w\}]$ and $B_w:=G[Z_w\cup\{w\}]$. Then
  $W\subseteq V(B_w)$ and thus
  $(A_w,B_w)\in\CT^2(W)=\CT'\subseteq\CT$.

  \begin{claim}
    Let $(A,B)\in\CT$. Then $W\setminus
    V(A)\neq\emptyset$.


    \proof
    Suppose for contradiction that $W\subseteq V(A)$. Let $S:=V(A)\cap
    V(B)$ and suppose that
    $S=\{s_1,s_2\}$. Let $w_i\in W$ such that $s_i\in
    Y_{w_i}\cup\{w_i\}$. Then $A\cup A_{w_1}\cup A_{w_2}=G$, which
    contradicts \ref{li:gt2}. This proves that $W\setminus
    V(A)\neq\emptyset$. 
    \uend
  \end{claim}

  It is now straightforward to prove that $\CT[W]$ satisfies the
  tangle axioms \ref{li:gt0}, \ref{li:gt1}, and \ref{li:gt3}. To prove
  \ref{li:gt2}, let $(A_i,B_i)\in\CT$ for $i=1,2,3$. We need to prove
  that $(A_1\cap W)\cup (A_2\cap W)\cup (A_3\cap W)\neq G[W]$. Without
  loss of generality we may assume that $(A_i,B_i)$ is minimal in
  $\CT$. Then $C_i:=B_i\setminus V(A_i)$ is connected. By Claim~1,
  $V(C_i)\cap W\neq\emptyset$. This implies that if $V(C_i)\cap
  Y_w\neq\emptyset$ for some $w\in W$, then $w\in V(C_i)$. 

  As $\CT$ satisfies \ref{li:gt2}, $A_1\cup A_2\cup A_3\neq G$, and
  thus there either is a vertex in $V(C_1)\cap V(C_2)\cap V(C_3)$ or
  an edge with an endvertex in every $V(C_i)$. Suppose first that
  $v\in V(C_1)\cap V(C_2)\cap V(C_3)$. If $v\in W$ then
  \[
  V\big((A_1\cap W)\cup (A_2\cap W)\cup (A_3\cap W)\big)\neq
  W=V\big(G[W]\big).
  \]
  Otherwise, $v\in Y_w$ for some $w\in W$, and we
  have $w\in V(C_1)\cap V(C_2)\cap V(C_3)$. Similarly, if $e=vv'$ has an endvertex in
  every $V(C_i)$, then we distinguish between the case that $v,v'\in W$,
  which implies $E\big((A_1\cap W)\cup (A_2\cap W)\cup (A_3\cap W)\big)\neq
  E\big(G[W]\big)$, and the case that $e\in E(A_w)$ for some $w\in W$,
  which implies $w\in V(C_1)\cap V(C_2)\cap V(C_3)$ and thus $V\big((A_1\cap W)\cup (A_2\cap W)\cup (A_3\cap W)\big)\neq
  W=V\big(G[W]\big)$. This proves
  \ref{li:gt2} and hence that $\CT[W]$ is a tangle.

The second assertion $X_{\CT}=X_{\CT[W]}$ follows from the fact that
$X_{\CT}\subseteq X_{\CT'}=W$.
\qed
\end{proof}

\begin{theorem}\label{theo:3tangle->block}
  Let $G$ be a graph. For every $G$-tangle $\CT$ of order $3$, the set
  $X_{\CT}$ is a vertex set of a proper triconnected component of $G$.
\end{theorem}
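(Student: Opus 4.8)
The plan is to reduce to the $2$-connected case using the lemma just proved, and then to establish the single inequality $|X_{\CT}|\ge 3$. Granting it, Lemma~\ref{lem:X} (with $k=3$) makes $X_{\CT}$ a $2$-block of $G$ with $\CT=\CT^3(X_{\CT})$; as $\CT$ is a tangle, Lemma~\ref{lem:exc} forbids $|X_{\CT}|=3$, so $X_{\CT}$ is a proper $2$-block, and its torso is by definition a proper triconnected component with vertex set $X_{\CT}$. For the reduction I replace $(G,\CT)$ by $(G[W],\CT[W])$ as in the preceding lemma; this keeps $X_{\CT}=X_{\CT[W]}$ and makes $G[W]$ a biconnected component, which cannot be a single edge since a two-vertex graph has no tangle of order $3$ (the order-$2$ separation $\bigl((\{u,v\},\emptyset),G\bigr)$ and its reverse both violate \ref{li:gt3}, against \ref{li:gt1}). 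So from now on $G$ is $2$-connected.

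Next I would identify $X_{\CT}$. Applying Theorem~\ref{theo:2tangle->block}(2) to the truncation of $\CT$ to order $2$ and using that a $2$-connected graph is its own unique biconnected component, one gets that every $(A,B)\in\CT$ with $\ord(A,B)\le 1$ has $V(B)=V(G)$; hence $X_{\CT}=\bigcap\{V(B):(A,B)\in\CT,\ \ord(A,B)=2\}$, and this is unchanged if we keep only a family $(A_1,B_1),\dots,(A_r,B_r)$ of separations of $\CT$ of order $2$, minimal in $\CT$, with $V(B_i)\ne V(G)$ and the $B_i$ pairwise distinct. For each such $i$, minimality together with Lemma~\ref{lem:reed} and $2$-connectivity gives $|S_i|=2$ (where $S_i:=V(A_i)\cap V(B_i)$), that $C_i:=B_i\setminus S_i$ is a connected component of $G\setminus S_i$ with $N(V(C_i))=S_i$ and $V(C_i)\ne\emptyset$, and that $A_i=G[V(A_i)]$; writing $D_i:=V(G)\setminus V(B_i)=V(A_i)\setminus S_i$ (nonempty), we have $X_{\CT}=V(G)\setminus\bigcup_iD_i$. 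If $r\le 1$ then $X_{\CT}$ is $V(G)$ or $V(B_1)$ and $|V(B_1)|=|V(C_1)|+2\ge 3$, so assume $r\ge 2$.

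Then I would prove, for distinct $i,j$: \textbf{(i)} $\ord(A_i\cup A_j,B_i\cap B_j)\ge 3$, and \textbf{(ii)} $D_i\cap D_j=\emptyset$. For (i): were it $<3$, Lemma~\ref{lem:tangle-closure}(3) would put $(A_i\cup A_j,B_i\cap B_j)$ in $\CT$ with order exactly $2$ (order $\le1$ would force $V(B_i)=V(B_j)=V(G)$) and with $B$-side properly inside $B_i$ or $B_j$, contradicting minimality. For (ii): submodularity of the order function gives $\ord(A_i\cap A_j,B_i\cup B_j)\le 1$; since $B_i\cup B_j\supseteq B_i$, Lemma~\ref{lem:tangle-closure}(2) puts $(A_i\cap A_j,B_i\cup B_j)$ in $\CT$, hence it has full vertex set, i.e.\ $V(B_i)\cup V(B_j)=V(G)$. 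Rewriting (i) as $|(S_i\setminus D_j)\cup(S_j\setminus D_i)|\ge 3$ settles the case $r=2$, since that set lies in $V(B_1)\cap V(B_2)=X_{\CT}$.

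The hard part is $r\ge 3$, because the submodularity estimate in (ii) only gives $\ord(A_i\cap A_j,B_i\cup B_j)\le 1$ rather than $=0$, so the order-$2$ argument does not carry over directly. I would argue by contradiction: assume $|X_{\CT}|\le 2$. By (i) there is no ``$2$-cycle'': if $S_i\cap D_j\ne\emptyset$ and $S_j\cap D_i\ne\emptyset$ then $|S_i\setminus D_j|,|S_j\setminus D_i|\le 1$, contradicting (i). Now any edge between $D_i$ and $D_j$ ($i\ne j$) has its $D_j$-endpoint in $S_i$ — otherwise that endpoint is in $V(C_i)$ and forces its $D_i$-neighbour into $V(C_i)\cup S_i$, impossible — and symmetrically its $D_i$-endpoint in $S_j$, producing a forbidden $2$-cycle; so no such edge exists. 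Hence $N(D_i)\subseteq X_{\CT}$, and since $N(D_i)=S_i$ (every component of $D_i$ has neighbourhood $S_i$, by $2$-connectivity) we get $S_i\subseteq X_{\CT}$ for all $i$. As $|S_i|=2\ge|X_{\CT}|$, every $S_i$ equals $X_{\CT}$; but a minimal separation of $\CT$ is determined by its separator, so all the $(A_i,B_i)$ coincide, contradicting $r\ge 2$. This yields $|X_{\CT}|\ge 3$ and completes the proof.
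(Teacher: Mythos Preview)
Your proof is correct and follows the same overall scaffolding as the paper: reduce to the $2$-connected case via the preceding lemma, enumerate the minimal order-$2$ separations $(A_i,B_i)$ in $\CT$, and show that their separators $S_i$ all lie in $X_{\CT}$, which forces $|X_{\CT}|\ge 3$ once two distinct $S_i$ are present.

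The one genuine difference is in the uncrossing step. The paper proves the stronger local statement $V(A_i)\subseteq V(B_j)$ for every pair of distinct minimal separations; the nontrivial part, $S_i\cap D_j=\emptyset$, is obtained by a delicate case analysis that builds an auxiliary order-$1$ separation $(A_1\setminus V(A_2),B_1)$ and derives a contradiction to minimality. You sidestep that analysis: from (i) and submodularity you only extract $D_i\cap D_j=\emptyset$, and then you run a global argument showing that any edge between $D_i$ and $D_j$ would create a forbidden ``$2$-cycle'' in the separators, so $N(D_i)=S_i$ is disjoint from every $D_j$ and hence lies in $X_{\CT}$. This is a nice alternative; it trades the paper's pairwise uncrossing for a connectivity observation about the pieces $D_i$. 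Note, incidentally, that your argument already yields $S_i\subseteq X_{\CT}$ unconditionally for $r\ge 2$, so the final contradiction from $|X_{\CT}|\le 2$ is not actually needed: once all $S_i$ sit in $X_{\CT}$ and are pairwise distinct, $|X_{\CT}|\ge |S_1\cup S_2|\ge 3$ follows immediately, just as in the paper.
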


\begin{proof}
  Let $\CT$ be a $G$-tangle of order $3$. 
  It suffices to prove that $|X_{\CT}|\ge 3$. Then by
  Lemma~\ref{lem:X}, $X_{\CT}$ is a 3-block and $\CT=\CT^3(X_{\CT})$,
  and by Lemma~\ref{lem:exc}, $X_{\CT}$ is proper 3-block, that is,
  the vertex set of a proper triconnected component.

  By the previous lemma, we may assume without loss of generality that
  $G$ is 2-connected.  The rest of the proof follows the lines of the proof of
  Theorem~\ref{theo:2tangle->block}. The core of the proof is again an
  ``uncrossing argument'' (this time a more complicated one) in
  Claim~1.

  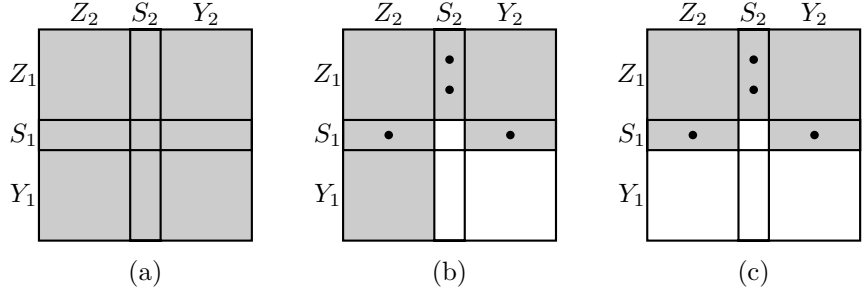
\begin{figure}
    \centering
    \begin{tikzpicture}[scale=0.8]

      \begin{scope}
        \fill[black!20] (-1.75,-1.75) rectangle (1.75,1.75);
      \draw[thick] (-1.75,-1.75) rectangle (1.75,1.75)
                   (-0.25,-1.75) rectangle (0.25,1.75) 
                   (-1.75,-0.25) rectangle (1.75,0.25);
      
      \path (-2,1) node {$Z_1$}             
            (-2,0) node {$S_1$}        
            (-2,-1) node {$Y_1$}  
            (-1,2) node {$Z_2$}
            (0,2) node {$S_2$}
            (1,2) node {$Y_2$}
      ;
      
      \path (0,-2.3) node {(a)};
      \end{scope}

     \begin{scope}[xshift=5cm]
       
       \fill[black!20] (-1.75,-1.75) rectangle (-0.25,1.75)
       (-0.25,0.25) rectangle (0.25,1.75)
       (0.25,-0.25) rectangle (1.75,1.75)
       ;

      \draw[thick] (-1.75,-1.75) rectangle (1.75,1.75)
                   (-0.25,-1.75) rectangle (0.25,1.75) 
                   (-1.75,-0.25) rectangle (1.75,0.25);
      
      \path (-2,1) node {$Z_1$}             
            (-2,0) node {$S_1$}        
            (-2,-1) node {$Y_1$}  
            (-1,2) node {$Z_2$}
            (0,2) node {$S_2$}
            (1,2) node {$Y_2$}
      ;

      \fill (-1,0) circle (2pt)  (1,0) circle (2pt)  (0,0.75) circle
      (2pt) (0,1.25) circle (2pt);

      \path (0,-2.3) node {(b)};
       \end{scope}

     \begin{scope}[xshift=10cm]
       
       \fill[black!20] (-1.75,-0.25) rectangle (-0.25,1.75)
       (-0.25,0.25) rectangle (0.25,1.75)
       (0.25,-0.25) rectangle (1.75,1.75)
       ;

      \draw[thick] (-1.75,-1.75) rectangle (1.75,1.75)
                   (-0.25,-1.75) rectangle (0.25,1.75) 
                   (-1.75,-0.25) rectangle (1.75,0.25);
      
      \path (-2,1) node {$Z_1$}             
            (-2,0) node {$S_1$}        
            (-2,-1) node {$Y_1$}  
            (-1,2) node {$Z_2$}
            (0,2) node {$S_2$}
            (1,2) node {$Y_2$}
      ;

      \fill (-1,0) circle (2pt)  (1,0) circle (2pt)  (0,0.75) circle
      (2pt) (0,1.25) circle (2pt);

      \path (0,-2.3) node {(c)};
       \end{scope}

    \end{tikzpicture}
    \caption{Uncrossing minimal separations of order $2$}
    \label{fig:cross}
  \end{figure}

  \begin{claim} 
    Let $(A_1,B_1),(A_2,B_2)\in\CT$ be distinct and minimal in $\CT$. Then
    $V(A_1)\subseteq V(B_2)$ and $V(A_2)\subseteq V(B_1)$.

    \proof Let $S_i:=V(A_i)\cap V(B_i)$ and $Y_i:=V(A_i)\setminus S_i$
    and $Z_i:= V(B_i)\setminus S_i$ (see Figure~\ref{fig:cross}(a)). 
    By the minimality of $(A_i,B_i)$, we have
    $Z_i=V\big(C(\CT,S_i)\big)$ and $S_i=N(Z_i)$. Thus $S_1\neq S_2$
    and $Z_1\neq Z_2$, because the two separations are distinct.

    It follows that $(A_1\cup A_2,B_1\cap B_2)$ is a separation with
    $B_1\cap B_2\subset B_i$, and by the minimality of $(A_i,B_i)$
    this separation is not in $\CT$. By
    Lemma~\ref{lem:tangle-closure}(3), this means that its order is at
    least $3$. Thus 
    \begin{equation}
      \tag{$\star$}
      \label{eq:2}
      |S_1\cap Z_2|+|S_1\cap S_2|+|Z_1\cap S_2|=|V(A_1\cup A_2)\cap V(B_1\cap B_2)|\ge 3.
    \end{equation}
    As $|S_i|\le 2$ and $S_1\neq S_2$, it follows that 
    \[
    |S_1\cap Y_2|+|S_1\cap S_2|+|Y_1\cap S_2|=|V(A_1\cap A_2)\cap
    V(B_1\cup B_2)|\le 1.
    \]
    Hence $(A_1\cap A_2,B_1\cup B_2)$ is a separation of order at most
    $1$. As $G$ is 2-connected, the separation is
    not proper, which means that either $V(A_1\cap A_2)=V(G)$ or
    $V(B_1\cup B_2)=V(G)$.  By Lemma~\ref{lem:tangle-closure}(2), we have $(A_1\cap
    A_2,B_1\cup B_2)\in\CT$ and thus $V(A_1\cap A_2)\neq V(G)$. Thus
    $V(B_1\cup B_2)=V(G)$, and this implies $Y_1\cap Y_2=\emptyset$.

    To prove that
    $V(A_i)=S_i\cup Y_i\subseteq V(B_{3-i})=S_{3-i}\cup Z_{3-i}$, we
    still need to prove that $S_i\cap Y_{3-i}=\emptyset$. Suppose for
    contradiction that $S_1\cap Y_2\neq\emptyset$. Then \eqref{eq:2}
    implies $|S_1\cap Y_2|=1$ and $|S_1\cap Z_2|=1$ and $|S_2\cap
    Z_1|=2$ and $S_1\cap S_2=Y_1\cap S_2=\emptyset$ (see
    Figure~\ref{fig:cross}(b)). Note that $(Y_1\cup S_1)\cap Z_2=V(A_1)\setminus
    V(A_2)$. It follows that $(A_1\setminus
    V(A_2), B_1)$ is a separation of $G$ of order $1$, and we have  $(A_1\setminus
    V(A_2), B_1)\in\CT$. Thus $Y_1\cap Z_2=\emptyset$, which implies
    $V(B_2)=Z_2\cup S_2\subset Z_1\cup S_1=V(B_1)$ (see
    Figure~\ref{fig:cross}(c)). This contradicts
    the minimality of $(A_1,B_1)$. Hence  $S_1\cap Y_2=\emptyset$, and
    similarly  $Y_1\cap S_2=\emptyset$.
    \uend
  \end{claim}

  Let $(A_1,B_1),\ldots,(A_m,B_m)$ be an enumeration of all
    minimal separations in $\CT$ of order $2$. Note that there is at
    least one minimal separation of order $2$ even if $G$ has no
    proper separations of order $2$. Thus $m\ge 1$.

    Let $S_i:=V(A_i)\cap V(B_i)$. Then the sets $S_i$ are all distinct,
    because two minimal separations in $\CT$ with the same separators
    are equal. It follows from Claim~1 that $S_i\subseteq V(B_j)$ for
    all $j\in[m]$ and thus
    \[
    S_1\cup\ldots\cup S_m\subseteq X_{\CT}.
    \]
    If $m\ge 2$ this implies $|X_{\CT}|\ge 3$. If $m=1$, then
    $X_{\CT}=V(B_1)$ and thus $|X_{\CT}|\ge 3$ by
    Lemma~\ref{lem:tangle-closure}. 
    \qed
\end{proof}

The results of this section clearly do not extend beyond tangles of
order $3$. For example, the hexagonal grid $H$ in Figure~\ref{fig:hex} has
a (unique) tangle $\CT$ of order $4$. But the set $X_{\CT}$ is empty,
and the graph $H$ has no 3-inseparable set of
cardinality greater than $1$.

Nevertheless, it is
shown in \cite{gro16+} that  there is an extension of the theorem
to tangles of order $4$ if we replace 4-connectivity by the slightly weaker
``quasi-4-connectivity'': a graph $G$ is \emph{quasi-4-connected}
if it is 3-connected and for all separations $(A,B)$ of order $3$,
either $|V(A)\setminus V(B)|\le 1$ or $|V(B)\setminus V(A)|\le 1$. For
example, the hexagonal grid $H$ in Figure~\ref{fig:hex} is
quasi-4-connected. It turns out that there is a one-to-one
correspondence between the tangles of order $4$ and (suitably defined)
quasi-4-connected components of a graph.

\section{A Broader Perspective: Tangles and Connectivity Systems}
\label{sec:tangles}
Many aspects of ``connectivity'' are not specific to
connectivity in graphs, but can be seen in an abstract and much more
general context. We describe ``connectivity'' on some structure as a
function that assigns an ``order'' (a nonnegative integer) to
every ``separation'' of the structure. We study symmetric connectivity
functions, where the separations $(A,B)$ and $(B,A)$ have the same
order. The key
property such connectivity functions need to satisfy is submodularity.

Separations can usually be described as partitions of a suitable set,
the ``universe''. For example, the separations of graphs we considered
in the previous sections are essentially partitions of the edge
set. Technically, it will be convenient to identify a
partition $(\bar X,X)$ with the set $X$, implicitly assuming that
$\bar X$ is
the complement of $X$. This leads to the following definition.

A \emph{connectivity function} on a finite set $U$ is a symmetric and
submodular function
$\kappa\colon 2^U\to\NN$ with $\kappa(\emptyset)=0$. \emph{Symmetric} means that $\kappa(X)=\kappa(\bar
X)$ for all $X\subseteq U$; here and whenever the ground set $U$ is clear from the context we
write $\bar X$ to denote $U\setminus
X$. \emph{Submodular} means that $\kappa(X)+\kappa(Y)\ge\kappa(X\cap
Y)+\kappa(X\cup Y)$ for all $X,Y\subseteq U$. 
The pair $(U,\kappa)$ is sometimes called a \emph{connectivity system}.

The following two examples capture what is known as \emph{edge connectivity}
and \emph{vertex connectivity} in a graph.

\begin{example}[Edge connectivity]
  Let $G$ be a graph. We define the function
  $\nu_G:2^{V(G)}\to\NN$ by letting $\nu_G(X)$ be the number of edges
  between $X$ and $\bar X$. Then $\nu_G$ is a connectivity function on
  $V(G)$.
  \uend
\end{example}

\begin{example}[Vertex connectivity]\label{exa:kappaG}
  Let $G$ be a graph. We define the function $\kappa_G:2^{E(G)}\to\NN$
  by letting $\kappa_G(X)$ be the number of vertices that are incident
  with an edge in $X$ and an edge in $\bar X$. Then $\kappa_G$ is a
  connectivity function on $E(G)$.

  Note that for all separations $(A,B)$ of $G$ we have
  $\kappa_G(E(A))=\kappa_G(E(B))\le\ord(A,B)$, with equality if
  $V(A)\cap V(B)$ contains no isolated vertices of $A$ or $B$.  For
  $X\subseteq E(G)$, let us denote the set of endvertices of the edges
  in $X$ by $V(X)$. Then for all $X\subseteq E(G)$ we have
  $\kappa_G(X)=\ord(A_X,B_X)$, where $B_X=(V(X),X)$ and
  $A_X=(V(\bar X),\bar X)$. The theory of tangles and decompositions
  of the connectivity function of
  $\kappa_G$ is essentially the same as the theory of tangles and
  decompositions of $G$ (partially developed in the previous sections).
  \uend
\end{example}

\begin{example}
    Let $G$ be a graph. For all subsets $X,Y\subseteq V(G)$, we let
  $M=M_G(X,Y)$ be the $X\times Y$-matrix over the 2-element field
  $\mathbb F_2$ with entries $M_{xy}=1\iff xy\in E(G)$.  Now we
  define a connectivity function $\rho_G$ on $V(G)$ by letting
  $
  \rho_G(X)
  $, known as the \emph{cut rank} of $X$, 
   be the row rank of the matrix $M_G(X,\bar X)$. This connectivity
   function was introduced by Oum and Seymour~\cite{oumsey06} to define the \emph{rank
     width} of graphs, which approximates the \emph{clique width}, but
   has better algorithmic properties.
   \uend
\end{example}

Let us also give an example of a connectivity function not related to
graphs.

\begin{example}\label{exa:3}
  Let $M$ be a matroid with ground set $E$ and rank function $r$. (The rank of a set
  $X\subseteq E$ is defined to be the maximum size of an independent set
  contained in $X$.) The connectivity function of $M$ is the set
  function $\kappa_M:E\to\NN$ defined by $\kappa_M(X)=r(X)+r(\bar
  X)-r(E)$ (see, for example, \cite{oxl11}).
  \uend
\end{example}


\subsection{Tangles}

Let $\kappa$ be a connectivity function on a set $U$.
A \emph{$\kappa$-tangle} of order $k\ge0$ is a set $\CT\subseteq 2^U$
satisfying the following conditions.
  \begin{nlist}{T}
  \setcounter{nlistcounter}{-1}
  \item\label{li:t0}
    $\kappa(X)<k$ for all $X\in\CT$, 
  \item\label{li:t1}
    For all $X\subseteq U$ with $\kappa(X)<k$, either $X\in\CT$ or
    $\bar X\in\CT$.
  \item\label{li:t2}
    $X_1\cap X_2\cap X_3\neq\emptyset$ for all $X_1,X_2,X_3\in\CT$.
  \item\label{li:t3}
    $\CT$ does not contain any singletons, that is, $\{a\}\not\in\CT$ for all $a\in U$.
\end{nlist}
We denote the order of a $\kappa$-tangle $\CT$ by $\ord(\CT)$.

We mentioned in Example~\ref{exa:kappaG} that the theory of
$\kappa_G$-tangles is essentially the same as the theory of tangles in
a graph. Indeed, $\kappa_G$-tangles and $G$-tangles are ``almost'' the
same. The following proposition makes this precise.

We call an edge of a graph \emph{isolated} if both of its endvertices
have degree $1$. We call an edge \emph{pendant} if it
is not isolated and has one endvertex of degree $1$. 

\begin{proposition}\label{prop:GvsK}
  Let $G$ be a graph and $k\ge 0$.
  \begin{enumerate}
  \item If $\CT$ is a $\kappa_G$-tangle of order $k$, then 
    \[
    \CS:=\big\{(A,B)\bigmid(A,B)\text{ separation of $G$ of order $<k$
      with
    }E(B)\in\CT\big\}
    \]
    is a $G$-tangle of order $k$.
  \item If $\CS$ is a $G$-tangle of order $k$, then
    \[
    \CT:=\big\{ E(B)\bigmid (A,B)\in\CS\big\}
    \]
    is a $\kappa_G$-tangle of order $k$, unless 
    \begin{eroman}
    \item either $k=1$ and there is an isolated vertex $v\in V(G)$ 
      such that $\CS$ is the set of all separations $(A,B)$ of order
      $0$ with with $v\in V(B)\setminus V(A)$,
    \item or $k=1$ and there is an isolated edge $e\in E(G)$ such that
      $\CS$ is the set of all separations $(A,B)$ of order $0$
      with $e\in E(B)$,
    \item or $k=2$ and there is an isolated or pendant edge
      $e=vw\in E(G)$ and $\CS$ is the set of all separations $(A,B)$
      of order at most $1$ with $e\in E(B)$.
\end{eroman}
  \end{enumerate}
\end{proposition}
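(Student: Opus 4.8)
The plan is to verify the tangle axioms \ref{li:t0}--\ref{li:t3} for $\CT$ in part (2) and the axioms \ref{li:gt0}--\ref{li:gt3} for $\CS$ in part (1), translating back and forth via the observation from Example~\ref{exa:kappaG} that $\kappa_G(E(B))\le\ord(A,B)$, with equality precisely when $V(A)\cap V(B)$ contains no isolated vertex of $A$ or $B$. For part (1), this inequality immediately gives \ref{li:gt0}; for \ref{li:gt1}, given a separation $(A,B)$ of $G$ of order $<k$ we get $\kappa_G(E(A)),\kappa_G(E(B))<k$, so \ref{li:t1} applied to $X=E(B)$ puts $E(B)\in\CT$ or $E(A)\in\CT$, hence $(A,B)\in\CS$ or $(B,A)\in\CS$. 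For \ref{li:gt2} I would take $(A_i,B_i)\in\CS$ with $E(B_i)\in\CT$; if $A_1\cup A_2\cup A_3=G$ then $E(A_1)\cup E(A_2)\cup E(A_3)=E(G)$, i.e.\ $E(B_1)\cap E(B_2)\cap E(B_3)=\emptyset$, contradicting \ref{li:t2}. For \ref{li:gt3} I would argue that $V(A)=V(G)$ forces $E(A)\supseteq$ all edges, so $E(B)$ is a set of isolated edges in $B$; then $\kappa_G(E(B))=0$ and some small-order-set argument (or directly \ref{li:t3} together with \ref{li:t2}) rules this out --- this is the place where one must be a little careful, but it parallels the exceptional cases of part (2) read in reverse, and in part (1) the hypothesis that $\CT$ is already a tangle of the given order does the work.

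For part (2), set $\CT:=\{E(B)\mid (A,B)\in\CS\}$. Axiom \ref{li:t0} follows since $\kappa_G(E(B))\le\ord(A,B)<k$. For \ref{li:t1}, given $X\subseteq E(G)$ with $\kappa_G(X)<k$, form the canonical separation $B_X=(V(X),X)$, $A_X=(V(\bar X),\bar X)$ as in Example~\ref{exa:kappaG}; then $\ord(A_X,B_X)=\kappa_G(X)<k$, so \ref{li:gt1} gives $(A_X,B_X)\in\CS$ or $(B_X,A_X)\in\CS$, i.e.\ $X=E(B_X)\in\CT$ or $\bar X=E(A_X)\in\CT$. For \ref{li:t2}, given $E(B_i)\in\CT$ with $(A_i,B_i)\in\CS$, axiom \ref{li:gt2} gives $A_1\cup A_2\cup A_3\neq G$; since this is a statement about both vertices and edges, I would split: if the union misses an edge $e$ then $e\notin E(A_1)\cup E(A_2)\cup E(A_3)$, so $e\in E(B_1)\cap E(B_2)\cap E(B_3)$ and we are done; if the union misses only a vertex $v$ (so every edge is covered), then $v$ is isolated in each $A_i$, and here I would use Lemma~\ref{lem:tangle-closure}(1) applied to $\CS$ --- since $E(B_i)$ contains every edge incident with $v$ but no such edge, each $B_i$ is essentially all of $G$ minus an isolated vertex, forcing $(A_i,B_i)$ into one of the listed exceptional configurations, contradicting the ``unless''. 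This is the main obstacle: the mismatch between ``$A_1\cup A_2\cup A_3\neq G$ as graphs'' and ``$E(B_1)\cap E(B_2)\cap E(B_3)\neq\emptyset$ as edge sets'' is exactly what is quarantined by the three exceptional cases, and verifying that nothing outside those cases slips through requires the degree bookkeeping around isolated and pendant edges/vertices.

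Finally, axiom \ref{li:t3}: if $\{e\}\in\CT$ with $e=vw$, then there is $(A,B)\in\CS$ with $E(B)=\{e\}$, so $B$ consists of the single edge $e$ plus possibly isolated vertices, $\ord(A,B)=|V(A)\cap V(B)|\le 2$, and $\ord(A,B)<k$. If $\ord(A,B)=0$ then $v,w\notin V(A)$, so $e$ is an isolated edge and $k\ge1$: case (ii) (if $k=1$) or, refining, case (iii). If $\ord(A,B)\in\{1,2\}$ then $k\ge2$ or $k\ge3$ respectively, and $e$ has at least one endvertex of degree $1$ in $G$ (the one not in $V(A)$), so $e$ is isolated or pendant, landing in case (iii) (for $\ord=1$) or being excluded by an analogous order-$3$ degenerate case --- but note the proposition only claims a clean statement for the three listed exceptions, so for $k\ge3$ one checks directly that $\{e\}\in\CT$ is impossible: a minimal separation in $\CS$ with $E(B)=\{e\}$ would have order $\le2<3$, and then $X_{\CS}\supseteq V(B)$ has size $\le 2<k$, contradicting Corollary~\ref{cor:tangle-closure} via Lemma~\ref{lem:tangle-closure}(1). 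I would assemble these observations, treat the symmetric part (1) as the (easier) converse bookkeeping, and remark that the exceptions in (2) are genuinely necessary by exhibiting, for each, the small graph ($K_2$, or $K_2$ plus an isolated vertex) where the indicated $\CS$ is a $G$-tangle but the induced $\CT$ violates \ref{li:t3} or \ref{li:t2}.
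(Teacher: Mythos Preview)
The paper omits this proof entirely (``We omit the straightforward (albeit tedious) proof''), so there is no argument to compare against; your overall plan is the right one, but two of your steps contain genuine errors rather than mere sketches.

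In part~(1), axiom~\ref{li:gt3}: your claim ``$V(A)=V(G)$ forces $E(A)\supseteq$ all edges'' is false --- $A$ can contain every vertex yet miss edges, so your conclusion ``$\kappa_G(E(B))=0$'' does not follow. What $V(A)=V(G)$ does give is $\ord(A,B)=|V(B)|<k$, hence $|V(X)|<k$ for $X:=E(B)\in\CT$. If $|V(X)|\le1$ then $X=\emptyset$, contradicting \ref{li:t2}; otherwise $k\ge3$, so for each $e\in X$ we have $\kappa_G(\{e\})\le2<k$ and $\overline{\{e\}}\in\CT$ by \ref{li:t1} and \ref{li:t3}. Since $\kappa_G(X')\le|V(X)|<k$ for every $X'\subseteq X$, the abstract analogue of Lemma~\ref{lem:tangle-closure}(3) lets you intersect repeatedly until $\emptyset\in\CT$, a contradiction.

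In part~(2), axiom~\ref{li:t2}, your Case~B: the assertion ``each $B_i$ is essentially all of $G$ minus an isolated vertex'' is wrong --- all you know is that the missing $v$ is isolated in $G$ and lies in each $V(B_i)\setminus V(A_i)$. The clean fix, which your sketch does not supply, is to normalise the witnesses first. For $k\ge2$ and any isolated vertex $u$, Lemma~\ref{lem:tangle-closure}(1) gives $(\{u\},G\setminus\{u\})\in\CS$, and Lemma~\ref{lem:tangle-closure}(3) then yields $(A_i\cup\{u\},B_i\setminus\{u\})\in\CS$ with the same edge set $E(B_i)$ and the same order. After pushing every isolated vertex of $G$ into each $A_i$, the hypothesis $E(A_1)\cup E(A_2)\cup E(A_3)=E(G)$ forces $A_1\cup A_2\cup A_3=G$, contradicting \ref{li:gt2}; so \ref{li:t2} holds for all $k\ge2$. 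For $k=1$ the normalisation is unavailable (Lemma~\ref{lem:tangle-closure}(1) needs $1<k$), and Theorem~\ref{theo:2tangle->block}(1) shows \ref{li:t2} fails exactly in exception~(i). The same normalisation, together with the observation that a vertex of $V(A)\cap V(B)$ isolated in $B$ can be deleted from $B$ while staying in $\CS$ (a one-line \ref{li:gt2} argument), also repairs your \ref{li:t3} verification for $k\ge3$: it reduces any witness to one with $V(B)=\{v,w\}$, where $|V(B)|=2<k$ and Lemma~\ref{lem:tangle-closure}(1) give the contradiction. Your stated justification there (``$\ord(A,B)\le2$'' straight off, and ``$X_{\CS}\supseteq V(B)$'') is not correct as written.
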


We omit the straightforward (albeit tedious) proof.

\begin{figure}
  \centering
  \begin{tikzpicture}[
    vertex/.style={draw,circle,fill=black,inner sep=0mm,minimum
    size=2mm},
  every edge/.style={draw,thick}
  ]
  \path (-2,0) node[vertex] (v1) {}
  (0,-0.75) node[vertex] (v2) {}
  (0,0.75) node[vertex] (v3) {}
  (1.5,-0.75) node[vertex] (v4) {}
  (1.5,0.75) node[vertex] (v5) {}
  (-2,1.5)  node[vertex] (v6) {}
  (-0.7,-2.25) node[vertex] (v7) {}
  (0.7,-2.25) node[vertex] (v8) {};
  
  \draw (v1) edge node[below] {$e_3$} (v2) 
  (v1) edge node[above] {$e_2$} (v3) 
  (v6) edge node[above] {$e_1$} (v3) 
  (v3) edge node[left] {$e_4$} (v2) 
  (v2) edge node[left] {$e_5$} (v7) 
  (v2) edge node[right] {$e_7$} (v8) 
  (v7) edge node[above] {$e_6$} (v8) 
  (v3) edge node[above] {$e_8$} (v5) 
  (v2) edge node[above] {$e_9$} (v4) 
  (v5) edge node[right] {$e_{10}$} (v4) 
;

  \end{tikzpicture}
  \caption{A graph $G$  with three $G$ tangles of order $2$ and two
    $\kappa_G$-tangles of order $2$}
  \label{fig:GvsK}
\end{figure}
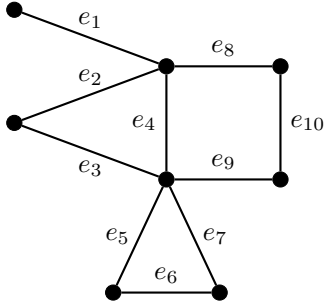

\begin{example}\label{exa:GvsK}
  Let $G$ be the graph shown in
  Figure~\ref{fig:GvsK}. $G$ has one
  tangle of order $1$ (since it is connected) and three tangles of
  order $2$ corresponding to the three biconnected components. The
  $G$-tangle corresponding to the ``improper'' biconnected component
  consisting of the edge $e_1$ and its endvertices does not correspond to
  a $\kappa_G$-tangle (by Proposition~\ref{prop:GvsK}(2-iii)).
  \uend
\end{example}

A \emph{star} is a connected graph in which at most $1$ vertex has
degree greater than $1$. Note that we admit degenerate stars
consisting of a single vertex or a single edge.

\begin{corollary}\label{cor:graph-tangle}
  Let $G$ be a graph that has a $G$-tangle of order $k$. Then $G$ has
 a $\kappa_G$-tangle of order $k$, unless $k=1$ and $G$ only has isolated edges
  or $k=2$ and all connected components of $G$ are stars.
\end{corollary}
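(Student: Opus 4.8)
The plan is to deduce Corollary~\ref{cor:graph-tangle} from Proposition~\ref{prop:GvsK}(2), which already lists exactly the three situations in which a $G$-tangle $\CS$ of order $k$ fails to yield a $\kappa_G$-tangle. So the real task is to show that those three ``exceptional'' situations, phrased there in terms of a single isolated vertex or a single isolated/pendant edge, aggregate at the level of the whole graph into the two clean conditions stated in the corollary: for $k=1$, that $G$ has only isolated edges; for $k=2$, that every connected component of $G$ is a star.

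First I would handle $k=1$. A $G$-tangle of order $1$ is governed only by separations of order $0$, i.e.\ by the partition of $G$ into connected components; as in Theorem~\ref{theo:2tangle->block}(1), such a tangle simply ``points at'' one connected component $C$. Proposition~\ref{prop:GvsK}(2) says the associated $\CT$ fails to be a $\kappa_G$-tangle precisely when $C$ is a single isolated vertex (case (i)) or a single isolated edge (case (ii)). So the corollary's escape clause should read: \emph{every} connected component that some order-$1$ $G$-tangle can point at is an isolated vertex or isolated edge. But an order-$1$ $G$-tangle exists for every component (by the construction in Theorem~\ref{theo:2tangle->block} or directly), so ``$G$ has a $G$-tangle of order $1$ but no $\kappa_G$-tangle of order $1$'' forces \emph{every} component of $G$ to be an isolated vertex or isolated edge. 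Since an isolated vertex carries no edges, this is the same as saying $G$ has only isolated edges (plus possibly isolated vertices, which are irrelevant to $\kappa_G$ since $\kappa_G$ lives on $E(G)$; I would state it as ``$G$ has only isolated edges'' matching the corollary, noting isolated vertices contribute nothing). The one subtlety: I must check that if \emph{one} component is a larger graph, then the $G$-tangle pointing at it \emph{does} give a $\kappa_G$-tangle; this is immediate from Proposition~\ref{prop:GvsK}(2) since none of (i)--(iii) applies at order $1$ to a component with at least two edges or a path/triangle, etc.

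Next, $k=2$. By Theorem~\ref{theo:2tangle->block}(2), an order-$2$ $G$-tangle corresponds to a biconnected component $B=G[X_\CS]$; as noted after Corollary~\ref{cor:tangle-closure}, $B$ is either a bridge (single edge) or a proper, $2$-connected block. Proposition~\ref{prop:GvsK}(2-iii) says the associated $\CT$ fails to be a $\kappa_G$-tangle exactly when the ``block'' is an isolated or pendant edge. A proper biconnected component is $2$-connected, hence not an edge, hence yields a genuine $\kappa_G$-tangle; a bridge that is neither isolated nor pendant — i.e.\ a bridge both of whose endpoints have degree $\ge 2$ — also yields a $\kappa_G$-tangle, because again (2-iii) asks for an isolated \emph{or pendant} edge. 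So ``$G$ has an order-$2$ $G$-tangle but no order-$2$ $\kappa_G$-tangle'' means every biconnected component of $G$ is an isolated or pendant edge. I would then argue this is equivalent to: every connected component of $G$ is a star. One direction is clear — in a star, the only biconnected components are its edges, each either isolated (if the star is a single edge) or pendant (if the star has $\ge 2$ edges). For the converse, suppose a component $K$ of $G$ is \emph{not} a star; I want to produce a biconnected component of $K$ that is neither isolated nor pendant. If $K$ has a cycle, it has a $2$-connected block (of order $\ge 3$), which is a proper biconnected component, not an edge at all — done. If $K$ is a tree but not a star, then $K$ has diameter $\ge 3$, so it contains an edge $vw$ with $\deg(v),\deg(w)\ge 2$; this edge is a bridge, hence its own biconnected component, and it is neither isolated nor pendant — done.

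I expect the main obstacle to be bookkeeping at the ``component vs.\ whole graph'' interface and getting the quantifiers right: Proposition~\ref{prop:GvsK} is about a \emph{given} tangle, while the corollary quantifies over the \emph{existence} of tangles, so I must be careful that the exceptional conditions are stated so that \emph{every} tangle of the given order falls into an exceptional case — equivalently, that if even one biconnected component (resp.\ connected component) is ``big enough'' then the corresponding $G$-tangle does produce a $\kappa_G$-tangle. The graph-theoretic lemma ``every connected graph that is not a star has a biconnected component that is neither an isolated nor a pendant edge'' is the only non-routine step, and its proof is the short case analysis (has a cycle / is a non-star tree) sketched above. Everything else is an unwinding of Proposition~\ref{prop:GvsK} together with Theorem~\ref{theo:2tangle->block}, so I would keep the write-up brief, as the paper does for Proposition~\ref{prop:GvsK} itself.
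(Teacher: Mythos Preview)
Your proposal is correct and is precisely the intended argument: the paper states the result as a corollary of Proposition~\ref{prop:GvsK} without giving any proof, so deducing it from part~(2) of that proposition together with the classification of $G$-tangles of order $1$ and $2$ in Theorem~\ref{theo:2tangle->block} (and Corollary~\ref{cor:block->tangle} for existence) is exactly what is expected. Your case analysis for $k=1$ and $k=2$ is accurate, including the small graph-theoretic lemma that a connected graph is a star if and only if each of its biconnected components is an isolated or pendant edge; for $k\ge 3$ there is nothing to prove since Proposition~\ref{prop:GvsK}(2) lists no exceptions.
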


\section{Decompositions and Duality}

A \emph{cubic tree} is a
tree where every node that is not a leaf has degree~$3$. An
\emph{oriented edge} of a tree $T$ is a pair $(s,t)$, where $st\in
E(T)$. We denote the set of all oriented edges of $T$ by $\vec E(T)$
and the set of leaves of $T$ by $L(T)$.  A \emph{branch
  decomposition} of a connectivity function $\kappa$ over $U$ is a pair $(T,\xi)$, where $T$ is a cubic
tree and $\xi$ a bijective mapping from $L(T)$ to $U$. For every
oriented edge $(s,t)\in\vec E(T)$ we define $\tilde\xi(s,t)$ to be the
set of all $\xi(u)$ for leaves $u\in L(T)$ contained in  the same
connected component of $T-\{st\}$ as $t$. Note that
$\tilde\xi(s,t)=\bar{\tilde\xi(t,s)}$. We define the \emph{width} of
the decomposition $(T,\xi)$ be the maximum of the values
$\kappa(\tilde\xi(t,u))$ for $(t,u)\in\vec E(T)$. The \emph{branch
  width} of $\kappa$, denoted by $\bw(\kappa)$, is the minimum of the widths of all its branch
decompositions.

The following fundamental result relates tangles and branch
decompositions; it is one of the reasons why tangles are such
interesting objects.

\begin{theorem}[Duality Theorem; Robertson and Seymour~\cite{gm10}]\label{theo:duality}
  The branch width of a connectivity function $\kappa$ equals the
  maximum order of a $\kappa$-tangle.
\end{theorem}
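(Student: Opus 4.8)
The plan is to establish two inequalities and combine them: for every $\kappa$-tangle $\CT$ I will show $\bw(\kappa)\ge\ord(\CT)$, and for every $k$ such that $\kappa$ has no $\kappa$-tangle of order $k+1$ I will show $\bw(\kappa)\le k$. Applying the second statement with $k$ the largest order of a $\kappa$-tangle (a finite number below which $\kappa$-tangles of every order exist, by truncation) then yields the theorem. Throughout I may assume $|U|\ge 3$, the cases $|U|\le 2$ being easy to check directly.

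For the first inequality I fix a $\kappa$-tangle $\CT$ of order $k$ and a branch decomposition $(T,\xi)$, and suppose for contradiction that its width is less than $k$, so $\kappa(\tilde\xi(s,t))<k$ for every $(s,t)\in\vec E(T)$. By \ref{li:t1} and \ref{li:t2}, exactly one of $\tilde\xi(s,t)$, $\tilde\xi(t,s)$ lies in $\CT$; I orient the edge $st$ from $s$ to $t$ in the former case. For a leaf $u$ the set $\tilde\xi(s,u)=\{\xi(u)\}$ is a singleton, so by \ref{li:t3} its unique edge points away from $u$; hence every leaf has out-degree $1$, and a short count on the cubic tree forces some internal node $v$ to have out-degree $0$, i.e.\ all edges at $v$ point towards $v$. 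If $P_1,P_2,P_3$ is the partition of $U=\xi(L(T))$ given by the three components of $T-v$, this means $\overline{P_1},\overline{P_2},\overline{P_3}\in\CT$; but $\overline{P_1}\cap\overline{P_2}\cap\overline{P_3}=\emptyset$, contradicting \ref{li:t2}.

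For the second inequality I would induct on $|U|$. One first observes that no singleton $\{a\}$ can have $\kappa(\{a\})>k$, since otherwise $\{X\subseteq U:\kappa(X)\le k,\,a\in X\}$ is a $\kappa$-tangle of order $k+1$; this settles the base case $|U|=3$, where the star joining one centre to the three leaves is a branch decomposition of width $\le k$. For $|U|\ge 4$ one checks likewise that if every $X$ with $\kappa(X)\le k$ had $|X|\le 1$ or $|\bar X|\le 1$, then $\{X:\kappa(X)\le k,\,|\bar X|\le 1\}$ would be a $\kappa$-tangle of order $k+1$; so there is a separation $X$ of order $\le k$ with $|X|,|\bar X|\ge 2$. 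Collapsing $\bar X$ to a new element $*$ gives a connectivity function $\kappa_X$ on $X\cup\{*\}$ defined by $\kappa_X(Y):=\kappa(Y)$ and $\kappa_X(Y\cup\{*\}):=\kappa(Y\cup\bar X)$ for $Y\subseteq X$ (symmetry and submodularity being inherited), and symmetrically a $\kappa_{\bar X}$ on $\bar X\cup\{*\}$. A branch decomposition of $\kappa_X$ of width $\le k$ and one of $\kappa_{\bar X}$ of width $\le k$ glue, along their $*$-leaves, into a branch decomposition of $\kappa$ of width $\le k$ (the new edge carries $X$, of order $\le k$); since both collapsed universes are strictly smaller than $U$, the induction closes provided $\kappa_X$ and $\kappa_{\bar X}$ again admit no $\kappa$-tangle of order $k+1$.

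That last point is the main obstacle: one must show that no order-$(k+1)$ tangle survives the collapse, equivalently that a hypothetical $\kappa_X$-tangle $\CS$ of order $k+1$ lifts to a $\kappa$-tangle of the same order. For $W\subseteq U$ with $W\subseteq X$ or $\bar X\subseteq W$ the side of $W$ is dictated by $\CS$, but the separations $W$ that cross $X$ must be decided by an uncrossing argument based on submodularity: with $Q_1:=W\cap X$, $Q_2:=W\cap\bar X$, $Q_3:=\bar W\cap X$, $Q_4:=\bar W\cap\bar X$ one has $\kappa(Q_1)+\kappa(Q_2)+\kappa(Q_3)+\kappa(Q_4)\le 2\kappa(W)+2\kappa(X)\le 4k$, which must be exploited to choose a side of $W$ consistently with \ref{li:t1}--\ref{li:t3} and with $\CS$ --- possibly after choosing $X$ itself carefully, e.g.\ minimal among separations of order $\le k$ with a prescribed element on one side. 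Carrying out this uncrossing bookkeeping is the technical core of the argument, and it is the part I would work out following Robertson and Seymour~\cite{gm10}.
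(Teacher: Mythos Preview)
The paper does not prove this theorem at all; it simply cites Robertson and Seymour~\cite{gm10} and states ``We omit the proof.'' So there is no paper-side argument to compare against, and your proposal has to be assessed on its own.

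Your first inequality is correct and is the standard argument: orient each edge of a width-$(<k)$ branch decomposition towards the side the tangle calls big, observe that \ref{li:t3} forces every leaf to be a source, and conclude that some internal (degree-$3$) node is a sink, which contradicts \ref{li:t2}. Nothing to add there.

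The second inequality is where the real content lies, and here your write-up is only a plan. The contraction-and-glue strategy you describe is a reasonable one, and you correctly isolate the crux: one must show that a $\kappa_X$-tangle of order $k+1$ lifts to a $\kappa$-tangle of order $k+1$. But this lifting is not the routine bookkeeping your last paragraph suggests. The submodular inequality you record, $\kappa(Q_1)+\kappa(Q_2)+\kappa(Q_3)+\kappa(Q_4)\le 2\kappa(W)+2\kappa(X)$, is true but is not the one that drives the argument; what one actually uses is $\kappa(W\cap X)+\kappa(W\cup\bar X)\le\kappa(W)+\kappa(X)$, which guarantees that at least one of $W\cap X$, $W\cup\bar X$ has order $\le k$ and hence is visible to the contracted tangle $\CS$. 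Declaring the big side of $W$ accordingly does give \ref{li:t1} (and one checks consistency using $X\in\CS$), but verifying \ref{li:t2} is delicate in the case where all three $W_i$ are decided via $W_i\cup\bar X$ rather than $W_i\cap X$: the triple intersection of the witnesses in $\CS$ may reduce to $\{*\}$, which yields no element of $W_1\cap W_2\cap W_3$. Closing this case requires a further argument, not just the inequality you wrote down. As it stands, your proof has a genuine gap that you yourself flag but do not fill.

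For comparison, the proofs in \cite{gm10} and in Geelen--Gerards--Whittle~\cite{geegerwhi09} do not proceed by contracting one side of a separation and lifting tangles. They build the branch decomposition directly from the failure of the tangle axioms, essentially showing that if every attempted orientation of the low-order separations violates \ref{li:t2}, then one can recursively split $U$ into three parts along a bad triple and assemble a cubic tree. If you want to complete your argument along the contraction route, you should expect the lifting lemma to require real work beyond a single submodularity line.
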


We omit the proof.

Let $G$ be a graph. A \emph{branch decomposition} of $G$ is defined to
be a branch decomposition of $\kappa_G$, and the \emph{branch width} of
$G$, denoted by $\bw(G)$, is the branch width of $\kappa_G$. 

\begin{figure}
  \centering
  \begin{tikzpicture}[
    vertex/.style={draw,circle,fill=black,inner sep=0mm,minimum
    size=1.5mm},
  every edge/.style={draw,thick},
  scale=0.8
  ]
 
    \node[vertex] {}
       child { node[vertex] {}
         child { node {$e_1$} }
         child { node[vertex] {}
           child { node[vertex] {}
             child { node {$e_2$} }
             child { node {$e_3$} }
           }
          child {node {$e_4$} }
         }
       }
       child { node {$e_8$}}
       child { node[vertex] {}
         child { node {$e_{10}$} }
         child { node[vertex] {}
           child { node {$e_9$ } }
           child { node[vertex] {}
             child { node[vertex] {}
               child { node {$e_5$ } }
               child { node {$e_6$} }
             }
             child { node {$e_7$} }
           }
         }
       }
       ;

  \end{tikzpicture}
  \caption{A branch decomposition of width $2$ of the graph shown in Figure~\ref{fig:GvsK}}
  \label{fig:bdec}
\end{figure}
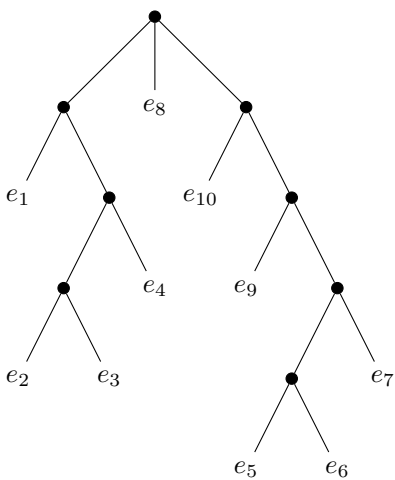

\begin{example}
  Let $G$ be the graph shown in
  Figure~\ref{fig:GvsK}. Figure~\ref{fig:bdec} shows a branch
  decomposition of $G$ of width $2$. Thus $\bw(G)\le 2$. As $G$ has a
  tangle of order $2$ (see Example~\ref{exa:GvsK}), by the Duality
  Theorem we have $\bw(G)=2$.
  \uend
\end{example}

The
branch width of a graph is closely related to the better-known \emph{tree
width} $\tw(G)$: it is not difficult to prove that
\[
\bw(G)\le\tw(G)+1\le\max\left\{\frac{3}{2}\bw(\kappa_G),\;2\right\}
\]
(Robertson and Seymour~\cite{gm10}). Both inequalities are tight. For
example, a complete graph $K_{3n}$ has branch width $2n$ and tree
width $3n-1$, and a path of length $3$ has branch width $2$ and tree
width $1$. There is also a related duality theorem for tree width, due
to Seymour and Thomas~\cite{seytho93}:
$\tw(G)+1$ equals the maximum order of bramble of $G$. (Recall the
characterisation of tangles that we gave in Theorem~\ref{theo:reed}
and the definition of brambles right after the theorem.)

\subsection*{Acknowledgements}
I thank Pascal Schweitzer and Konstantinos Stavropoulos for helpful
comments on a earlier version of the paper.


\begin{thebibliography}{10}

\bibitem{cardiehar+13a}
J.~Carmesin, R.~Diestel, M.~Hamann, and F.~Hundertmark.
\newblock Canonical tree-decompositions of finite graphs {I}. {E}xistence and
  algorithms.
\newblock {\em ArXiv}, arXiv:1305.4668v3 [math.CO], 2013.

\bibitem{cardiehun+14}
J.~Carmesin, R.~Diestel, F.~Hundertmark, and M.~Stein.
\newblock Connectivity and tree structure in finite graphs.
\newblock {\em Combinatorica}, 34(1):11--46, 2014.

\bibitem{demhajkaw05}
E.D. Demaine, M.T. Hajiaghayi, and K.~Kawarabayashi.
\newblock Algorithmic graph minor theory: Decomposition, approximation, and
  coloring.
\newblock In {\em Proceedings of the 45th Annual IEEE Symposium on Foundations
  of Computer Science}, pages 637--646, 2005.

\bibitem{die10}
R.~Diestel.
\newblock {\em Graph Theory}.
\newblock Springer, 4th edition, 2010.

\bibitem{geegerwhi09}
J.~Geelen, B.~Gerards, and G.~Whittle.
\newblock Tangles, tree-decompositions and grids in matroids.
\newblock {\em Journal of Combinatorial Theory, Series B}, 99(4):657--667,
  2009.

\bibitem{gro16+}
M.~Grohe.
\newblock Quasi-4-connected components.
\newblock {\em ArXiv}, 2016.

\bibitem{grokawree13}
M.~Grohe, K.~Kawarabayashi, and B.~Reed.
\newblock A simple algorithm for the graph minor decomposition -- logic meets
  structural graph theory.
\newblock In {\em Proceedings of the 24th Annual ACM-SIAM Symposium on Discrete
  Algorithms}, pages 414--431, 2013.

\bibitem{gromar12}
M.~Grohe and D.~Marx.
\newblock Structure theorem and isomorphism test for graphs with excluded
  topological subgraphs.
\newblock In {\em Proceedings of the 44th ACM Symposium on Theory of
  Computing}, 2012.

\bibitem{gromar15}
M.~Grohe and D.~Marx.
\newblock Structure theorem and isomorphism test for graphs with excluded
  topological subgraphs.
\newblock {\em {SIAM} Journal on Computing}, 44(1):114--159, 2015.

\bibitem{groschwe15a}
M.~Grohe and P.~Schweitzer.
\newblock Computing with tangles.
\newblock In {\em Proceedings of the 47th ACM Symposium on Theory of
  Computing}, pages 683--692, 2015.

\bibitem{groschwe15b}
M.~Grohe and P.~Schweitzer.
\newblock Isomorphism testing for graphs of bounded rank width.
\newblock In {\em Proceedings of the 55th Annual IEEE Symposium on Foundations
  of Computer Science}, 2015.

\bibitem{hoptar73a}
J.~E. Hopcroft and R.~Tarjan.
\newblock Dividing a graph into triconnected components.
\newblock {\em {SIAM} Journal on Computing}, 2(2):135--158, 1973.

\bibitem{hun11}
F.~Hundertmark.
\newblock Profiles. {A}n algebraic approach to combinatorial connectivity.
\newblock {\em ArXiv}, arXiv:1110.6207v1 [math.CO], 2011.

\bibitem{kawwol11}
K.-I. Kawarabayashi and P.~Wollan.
\newblock A simpler algorithm and shorter proof for the graph minor
  decomposition.
\newblock In {\em Proceedings of the 43rd ACM Symposium on Theory of
  Computing}, pages 451--458, 2011.

\bibitem{mac37}
S.~MacLane.
\newblock A structural characterization of planar combinatorial graphs.
\newblock {\em Duke Mathematical Journal}, 3(3):460--472, 1937.

\bibitem{oum05}
S.-I. Oum.
\newblock Rank-width and vertex-minors.
\newblock {\em Journal of Combinatorial Theory, Series B}, 95:79--100, 2005.

\bibitem{oumsey06}
S.-I. Oum and P.D. Seymour.
\newblock Approximating clique-width and branch-width.
\newblock {\em Journal of Combinatorial Theory, Series B}, 96:514--528, 2006.

\bibitem{oxl11}
J.~Oxley.
\newblock {\em Matroid Theory}.
\newblock Cambridge University Press, 2nd edition, 2011.

\bibitem{ree97}
B.~Reed.
\newblock Tree width and tangles: A new connectivity measure and some
  applications.
\newblock In R.A. Bailey, editor, {\em Surveys in Combinatorics}, volume 241 of
  {\em LMS Lecture Note Series}, pages 87--162. Cambridge University Press,
  1997.

\bibitem{gmseries}
N.~Robertson and P.D. Seymour.
\newblock Graph minors {I}--{XXIII}.
\newblock \emph{Journal of Combinatorial Theory, Series B} 1982--2012.

\bibitem{gm10}
N.~Robertson and P.D. Seymour.
\newblock Graph minors~{X}. {O}bstructions to tree-decomposition.
\newblock {\em Journal of Combinatorial Theory, Series B}, 52:153--190, 1991.

\bibitem{gm16}
N.~Robertson and P.D. Seymour.
\newblock Graph minors~{XVI}. {E}xcluding a non-planar graph.
\newblock {\em Journal of Combinatorial Theory, Series B}, 77:1--27, 1999.

\bibitem{seytho93}
P.D. Seymour and R.~Thomas.
\newblock Graph searching and a min-max theorem for tree-width.
\newblock {\em Journal of Combinatorial Theory, Series B}, 58:22--33, 1993.

\bibitem{tar72}
R.~Tarjan.
\newblock Depth-first search and linear graph algorithms.
\newblock {\em {SIAM} Journal on Computing}, 1(2):146--160, 1972.

\bibitem{tut84}
W.T. Tutte.
\newblock {\em Graph Theory}.
\newblock Addison-Wesley, 1984.

\end{thebibliography}

\end{document}